\documentclass[submission,copyright,creativecommons]{eptcs}
 % Name of the event you are submitting to
\usepackage{breakurl}             % Not needed if you use pdflatex only.
%\usepackage{underscore}           % Only needed if you use pdflatex.

%%%%%%%%%%%%%%%%%%%

\usepackage{graphicx}
\usepackage[utf8]{inputenc}
\usepackage{amsfonts}
\usepackage{amsmath}
\usepackage{amsbsy}
\usepackage{mathrsfs}
\usepackage{amsmath}
\usepackage{amsthm}
\usepackage{url}
\usepackage{ebproof}
\usepackage{proof}
\usepackage{color}
\usepackage{graphicx}
\usepackage{ebproof}
\usepackage{cmll}
\usepackage{stmaryrd}
\usepackage{wrapfig}
\usepackage{bm} 
%\usepackage{dsfont}

%%%%%%%%%
\usepackage{cleveref}
\usepackage{thmtools,thm-restate}
\usepackage{mathtools}
\usepackage{xspace}
\usepackage{cancel}
\usepackage[normalem]{ulem}
\usepackage{diagrams}

%\usepackage[utf8]{inputenc}
%\usepackage{amsfonts}
%\usepackage{amsmath}
%\usepackage{amsthm}
%\usepackage{amssymb}
%\usepackage{amsbsy}
%\usepackage{mathrsfs}
%\usepackage{amsmath}
%\usepackage{url}
%\usepackage{ebproof}
%\usepackage{proof}
%\usepackage{color}
%\usepackage{graphicx}
%\usepackage{ebproof}
%\usepackage{cmll}
%\usepackage{stmaryrd}
%\usepackage{proof}
%\usepackage{wrapfig}
%\usepackage{bm} 
%\usepackage{diagrams}
%
%
%%%%%%%%%%
%\usepackage{cleveref}
%\usepackage{thmtools,thm-restate}
%%\usepackage{hyperref}
%\usepackage{mathtools}
%\usepackage{xspace}
%\usepackage{cancel}
%\usepackage[normalem]{ulem}

%%%%%% COMMENTI FIRMATI  %%%
%\newcommand{\cf}[1]{{\color{violet}{#1}}}
%\newcommand{\CF}[1]{{\color{violet}{**CF: #1 **}}}
%\newcommand{\giulio}[1]{\green{#1}}
%\newcommand{\RT}[1]{{\color{magenta}{**Riccardo: #1}}}

%%% THEOREMS %%%
\newtheorem{definition}{Definition}[section]

\newtheorem{theorem}[definition]{Theorem}

\newtheorem{lemma}[definition]{Lemma}
\newtheorem{proposition}[definition]{Proposition}
\newtheorem{remark}[definition]{Remark}

\newcommand{\myparagraph}[1]{\medskip \noindent {\bf #1}.}
\newcommand{\lamImp}{\lambda_{imp}}
\newcommand{\Dcat}{{\cal D}}
\newcommand{\iso}{\cong}
\newcommand{\getF}[2]{\textit{get}_{#1} (#2)}
\newcommand{\setF}[3]{\textit{set}^\Filt_{#1} (#2, #3)}
\newcommand{\FILTS}[1]{{\tt Filt}\Set{#1}}

\newcommand{\ALG}{\omega\text{-}\textrm{\bf ALG}}

% Basic
\newcommand{\Then}{\Longrightarrow}

\newcommand{\Set}[1]{\{#1\}}

% Sintassi lambda calcolo computazionale

%\newcommand{\Val}{\textsf{V}}  %Qui c'e' un conflitto di notazioni. Decidiamo
%\newcommand{\Comp}{\textsf{C}}

\newcommand{\unit}[1]{[ #1 ]}
\newcommand{\UnitSub}[1]{\mbox{\it unit}_{#1}}
\newcommand{\Unit}{\mbox{\it unit}\;}

\newcommand{\Bind}{\star}
\newcommand{\get}[2]{\textit{get}_{#1} (#2)}
\newcommand{\set}[3]{\textit{set}_{#1} (#2, #3)}

\newcommand{\Subst}[2]{[ #1 / #2]}

\newcommand{\ValTerm}{\textit{Val}}
\newcommand{\ComTerm}{\textit{Com}}

\newcommand{\LeftUnit}{\textit{Left unit}}
\newcommand{\RightUnit}{\textit{Right unit}}
\newcommand{\Assoc}{\textit{Assoc}}

%Aggiunti da Claudia%

%\newcommand{\II}{(\lam z. !z)}

% Contesti

  %contesto
  %contesto valore
  %contesto weak
 %contesto surface
 %contesto left
 %contesto left
  %contesto di valutazione
% Sintassi Moggi

\newcommand{\Let}[3]{\mbox{\textit{let} $#1 = #2$ \textit{in} $#3$}}

%---------------------------------------------%

% Instanziazioni 

\newcommand{\Var}{\textit{Var}}

% Traduzioni
%\usepackage{stmaryrd}
\newcommand{\Sem}[1]{\llbracket   #1 \rrbracket}

\newcommand{\SemD}[1]{\Sem{#1}^D}
\newcommand{\SemSD}[1]{\Sem{#1}^{\GS D}}

%---------------------------------------------%

% Confluenza e Fattorizzazione
\makeatletter
\newcommand{\circlearrow}{}% just in case
\DeclareRobustCommand{\circlearrow}{%
	\mathrel{\vphantom{\rightarrow}\mathpalette\circle@arrow\relax}%
}
\newcommand{\circle@arrow}[2]{%
	\m@th
	\ooalign{%
		\hidewidth$#1\circ\mkern1mu$\hidewidth\cr
		$#1\longrightarrow$\cr}%
}
\makeatother
% Gruppi di Riduzioni

% Proprietà 

% Strategie

\renewcommand{\int}{\mathsf{i}}

%\newcommand{\lo}{\mathsf{lo}}

%---------------------------------------------%

% Tipi 
\newcommand{\Inter}{\wedge}

\newcommand{\Th}{\textit{Th}}

\newcommand{\tuple}[1]{\langle #1 \rangle}

% Giudizi
\newcommand{\der}{\vdash}

% Regole ammissibili

% Teoria

% Metodo di Tait

%---------------------------------------------%

% Semantica
\newcommand{\Env}{\mbox{\it Env}}
\newcommand{\metalambda}{%
	\mathop{%
		\rlap{$\lambda$}%
		\mkern2mu
		\raisebox{.275ex}{$\lambda$}%
	}%
}

\newcommand{\Filt}{{\cal F}}
\newcommand{\FILT}[1]{{\tt Filt}\Set{#1}}

\newcommand{\UnitF}{\textit{unit}^\Filt}

%---------------------------------------------%

% Categorie
\newcommand{\CatC}{\mathcal{C}}

\newcommand{\CatDom}{\textbf{Dom}}

%---------------------------------------------%

% Monadi

%\newcommand{\Label}{\mathcal{L}}
%\newcommand{\Label}{\mathds{L}}
\newcommand{\Label}{\mathbb{L}}

\newcommand{\Store}{S}

\newcommand{\GS}{\mathbb{S}}
\newcommand{\GSeta}{\eta^{\GS}}
\newcommand{\GSext}[1]{#1^{\GS}}

%---------------------------------------------%

% Simbolo di Halmos

%---------------------------------------------%

% Colori

\usepackage{xcolor}

\newcommand{\RED}[1]{{\color{red}{#1}}}

\newcommand{\violet}[1]{{\color{violet}{#1}}}

%\newcommand{\cyan}[1]{{\color{cyan}{#1}}}

%---------------------------------------------%
%%%%%%  GIULIO %%%%%%

%functions

\newcommand{\dom}[1]{\textit{dom}( #1 )}

%contructs

%contexts

%terms

%\newcommand{\tm}{t}

%reductions
\renewcommand{\to}{\xrightarrow{}}

%%%%%%%%%%%%%%%
%\newcommand{\Rule}[1]{\mapsto_{#1}}

%\newcommand{\ered}{\uset{\ex}{\red}}
%\newcommand{\ired}{\uset{\int}{\red}}

%\newcommand{\wredx}[1]  {\mathrel{\wred{}_{\mkern-8mu#1}}}
%\newcommand{\nwredx}[1]{\mathrel{\nwred{}_{\mkern-8mu#1}}}

%\newcommand{\lredsi}  {\mathrel{\lredx{\sigma_i}}}
%\newcommand{\nlredsi}{\mathrel{\nlredx{\sigma_i}}}

%\newcommand{\sredx}[1]  {\mathrel{\sred{}_{\mkern-8mu#1}}}
%\newcommand{\nsredx}[1]{\mathrel{\nsred{}_{\mkern-8mu#1}}}

%\newcommand{\redx}  {\red_{\kappa}}
%\newcommand{\eredx}  {\mathrel{\ered{}_{\mkern-8mu\kappa}}}
%\newcommand{\iredx}  {\mathrel{\ired{}_{\mkern-8mu\kappa}}}

%\newcommand{\lored}{\uset{\lo}{\red}}
%\newcommand{\nlored}{\uset{\neg \lo~}{\red}}
%\newcommand{\loredb}  {\mathrel{\lored{}_{\mkern-8mu\beta}}}
%\newcommand{\nloredb}{\mathrel{\nlored{}_{\mkern-8mu\beta}}}
%
%\newcommand{\loredx}[1]  {\mathrel{\lored{}_{\mkern-8mu#1}}}
%\newcommand{\nloredx}[1]{\mathrel{\nlored{}_{\mkern-8mu#1}}}
%
%\newcommand{\loredc}  {\mathrel{\lored{}_{\mkern-8mu\gamma}}}
%\newcommand{\nloredc}{\mathrel{\nlored{}_{\mkern-8mu\gamma}}}
%
%%---------------------------------------------%

% % % % % % % % % CLAUDIA % % % % % % % % % %

%%%%%%%text
\newcommand{\ie}{\emph{i.e.}\xspace}

%%%%%%%%%%%%%%

% new \oset macro
%\makeatletter
%\newcommand{\oset}[3][0ex]{%
%	\mathrel{\mathop{#3}\limits^{
%			\vbox to#1{\kern0\ex@
%				\hbox{$\tiny#2$}\vss}}}}
%\makeatother

% new \oset macro
\makeatletter
\newcommand{\uset}[3][0ex]{%
	\mathrel{\mathop{#3}\limits_{
			\vbox to#1{\kern-6\ex@
				\hbox{$\scriptstyle#2$}\vss}}}}
\makeatother

%% new \oset macro
%\makeatletter
%\newcommand{\oset}[3][0ex]{%
%	\mathrel{\mathop{#3}\limits^{
%			\vbox to#1{\kern-2\ex@
%				\hbox{$\scriptstyle#2$}\vss}}}}
%\makeatother
%
%$X_n\oset[.35ex]{j}{\to} X$

%\newcommand{\var}{x}

%%%%%%%%%%%%%%%

\RequirePackage{ifthen}

%%%%%%%%%%%%%%%
%0=LEAVE COMNENTS
%1=HIDE COMMENTS
%%%%%%%%%%%%%%%%%%%%%%%%%
% 0 = SHORT VERSION     %
% 1 = LONG VERSION      %
\newcommand{\version}{0}
\newcommand{\commenti}{0} 
%%%%%%%%%%%%%%%%%%%%%%%%%
\newcommand{\condinc}[2]{\ifthenelse{\equal{\commenti}{0}}{#1}{**\violet{#2}} }
\newcommand{\SLV}[2]{\ifthenelse{\equal{\version}{0}}{#1}{ \RED{#2}}}
%\newcommand{\SLV}[2]{\ifthenelse{\equal{\version}{0}}{#1}{ #2}}

%%%%%%%%%%%%% da ordinare

\newcommand{\storeMap}{\varsigma}

\newcommand{\finMap}{\varsigma}

\newcommand{\omegaD}{\omega_D}
\newcommand{\omegaS}{\omega_{\Store}}
\newcommand{\omegaC}{\omega_{C}}
\newcommand{\omegaSD}{\omega_{\GS D}}

\newcommand{\leqD}{\leq_D}
\newcommand{\leqS}{\leq_S}
\newcommand{\leqC}{\leq_{C}}
\newcommand{\leqSD}{\leq_{\GS D}}

\newcommand{\Lang}{\mathcal{L}}

%\newcommand{\boldOmega}{{\bf \Omega}}

%\newcommand{\sat}[2]{| #1 |_{#2}}
%\newcommand{\boldOmega}{{\bf \Omega}}

%\newcommand{\TypeVar}{\textit{TypeVar}}

%%%%%%%%%%%% RULE NAMES %%%%%%%%%%%%

\newcommand{\omegaR}{(\omega)}
\newcommand{\interR}{(\Inter)}
\newcommand{\leqR}{(\leq)}
\newcommand{\varR}{(\textit{var})}
\newcommand{\lambdaR}{(\lambda)}
\newcommand{\unitR}{(\textit{unit})}
\newcommand{\bindR}{(\Bind)}
\newcommand{\getR}{(\textit{get})}
\newcommand{\setR}{(\textit{set})}

%%%%%%%%%%%%%%%%%%%%%%%%%%%%%%%%%%%%%%%%%%%

\message{<Paul Taylor's Proof Trees, 2 August 1996>}

\newdimen\proofrulebreadth \proofrulebreadth=.05em
\newdimen\proofdotseparation \proofdotseparation=1.25ex
\newdimen\proofrulebaseline \proofrulebaseline=2ex
\newcount\proofdotnumber \proofdotnumber=3
\let\then\relax
\def\hfi{\hskip0pt plus.0001fil}
\mathchardef\squigto="3A3B
%
% flag where we are
\newif\ifinsideprooftree\insideprooftreefalse
\newif\ifonleftofproofrule\onleftofproofrulefalse
\newif\ifproofdots\proofdotsfalse
\newif\ifdoubleproof\doubleprooffalse
\let\wereinproofbit\relax
%
% dimensions and boxes of bits
\newdimen\shortenproofleft
\newdimen\shortenproofright
\newdimen\proofbelowshift
\newbox\proofabove
\newbox\proofbelow
\newbox\proofrulename
%
% miscellaneous commands for setting values
\def\shiftproofbelow{\let\next\relax\afterassignment\setshiftproofbelow\dimen0 }
\def\shiftproofbelowneg{\def\next{\multiply\dimen0 by-1 }%
\afterassignment\setshiftproofbelow\dimen0 }
\def\setshiftproofbelow{\next\proofbelowshift=\dimen0 }
\def\setproofrulebreadth{\proofrulebreadth}

%=============================================================================
\def\prooftree{% NESTED ZERO (\ifonleftofproofrule)
%
% first find out whether we're at the left-hand end of a proof rule
\ifnum  \lastpenalty=1
\then   \unpenalty
\else   \onleftofproofrulefalse
\fi
%
% some space on left (except if we're on left, and no infinity for outermost)
\ifonleftofproofrule
\else   \ifinsideprooftree
        \then   \hskip.5em plus1fil
        \fi
\fi
%
% begin our proof tree environment
\bgroup% NESTED ONE (\proofbelow, \proofrulename, \proofabove,
%               \shortenproofleft, \shortenproofright, \proofrulebreadth)
\setbox\proofbelow=\hbox{}\setbox\proofrulename=\hbox{}%
\let\justifies\proofover\let\leadsto\proofoverdots\let\Justifies\proofoverdbl
\let\using\proofusing\let\[\prooftree
\ifinsideprooftree\let\]\endprooftree\fi
\proofdotsfalse\doubleprooffalse
\let\thickness\setproofrulebreadth
\let\shiftright\shiftproofbelow \let\shift\shiftproofbelow
\let\shiftleft\shiftproofbelowneg
\let\ifwasinsideprooftree\ifinsideprooftree
\insideprooftreetrue
%
% now begin to set the top of the rule (definitions local to it)
\setbox\proofabove=\hbox\bgroup$\displaystyle % NESTED TWO
\let\wereinproofbit\prooftree
%
% these local variables will be copied out:
\shortenproofleft=0pt \shortenproofright=0pt \proofbelowshift=0pt
%
% flags to enable inner proof tree to detect if on left:
\onleftofproofruletrue\penalty1
}

%=============================================================================
% end whatever box and copy crucial values out of it
\def\eproofbit{% NESTED TWO
%
% various hacks applicable to hypothesis list 
\ifx    \wereinproofbit\prooftree
\then   \ifcase \lastpenalty
        \then   \shortenproofright=0pt  % 0: some other object, no indentation
        \or     \unpenalty\hfil         % 1: empty hypotheses, just glue
        \or     \unpenalty\unskip       % 2: just had a tree, remove glue
        \else   \shortenproofright=0pt  % eh?
        \fi
\fi
%
% pass out crucial values from scope
\global\dimen0=\shortenproofleft
\global\dimen1=\shortenproofright
\global\dimen2=\proofrulebreadth
\global\dimen3=\proofbelowshift
\global\dimen4=\proofdotseparation
\global\count255=\proofdotnumber
%
% end the box
$\egroup  % NESTED ONE
%
% restore the values
\shortenproofleft=\dimen0
\shortenproofright=\dimen1
\proofrulebreadth=\dimen2
\proofbelowshift=\dimen3
\proofdotseparation=\dimen4
\proofdotnumber=\count255
}

%=============================================================================
\def\proofover{% NESTED TWO
\eproofbit % NESTED ONE
\setbox\proofbelow=\hbox\bgroup % NESTED TWO
\let\wereinproofbit\proofover
$\displaystyle
}%
%
%=============================================================================
\def\proofoverdbl{% NESTED TWO
\eproofbit % NESTED ONE
\doubleprooftrue
\setbox\proofbelow=\hbox\bgroup % NESTED TWO
\let\wereinproofbit\proofoverdbl
$\displaystyle
}%
%
%=============================================================================
\def\proofoverdots{% NESTED TWO
\eproofbit % NESTED ONE
\proofdotstrue
\setbox\proofbelow=\hbox\bgroup % NESTED TWO
\let\wereinproofbit\proofoverdots
$\displaystyle
}%
%
%=============================================================================
\def\proofusing{% NESTED TWO
\eproofbit % NESTED ONE
\setbox\proofrulename=\hbox\bgroup % NESTED TWO
\let\wereinproofbit\proofusing
\kern0.3em$
}

%=============================================================================
\def\endprooftree{% NESTED TWO
\eproofbit % NESTED ONE
% \dimen0 =     length of proof rule
% \dimen1 =     indentation of conclusion wrt rule
% \dimen2 =     new \shortenproofleft, ie indentation of conclusion
% \dimen3 =     new \shortenproofright, ie
%                space on right of conclusion to end of tree
% \dimen4 =     space on right of conclusion below rule
  \dimen5 =0pt% spread of hypotheses
% \dimen6, \dimen7 = height & depth of rule
%
% length of rule needed by proof above
\dimen0=\wd\proofabove \advance\dimen0-\shortenproofleft
\advance\dimen0-\shortenproofright
%
% amount of spare space below
\dimen1=.5\dimen0 \advance\dimen1-.5\wd\proofbelow
\dimen4=\dimen1
\advance\dimen1\proofbelowshift \advance\dimen4-\proofbelowshift
%
% conclusion sticks out to left of immediate hypotheses
\ifdim  \dimen1<0pt
\then   \advance\shortenproofleft\dimen1
        \advance\dimen0-\dimen1
        \dimen1=0pt
%       now it sticks out to left of tree!
        \ifdim  \shortenproofleft<0pt
        \then   \setbox\proofabove=\hbox{%
                        \kern-\shortenproofleft\unhbox\proofabove}%
                \shortenproofleft=0pt
        \fi
\fi
%
% and to the right
\ifdim  \dimen4<0pt
\then   \advance\shortenproofright\dimen4
        \advance\dimen0-\dimen4
        \dimen4=0pt
\fi
%
% make sure enough space for label
\ifdim  \shortenproofright<\wd\proofrulename
\then   \shortenproofright=\wd\proofrulename
\fi
%
% calculate new indentations
\dimen2=\shortenproofleft \advance\dimen2 by\dimen1
\dimen3=\shortenproofright\advance\dimen3 by\dimen4
%
% make the rule or dots, with name attached
\ifproofdots
\then
        \dimen6=\shortenproofleft \advance\dimen6 .5\dimen0
        \setbox1=\vbox to\proofdotseparation{\vss\hbox{$\cdot$}\vss}%
        \setbox0=\hbox{%
                \advance\dimen6-.5\wd1
                \kern\dimen6
                $\vcenter to\proofdotnumber\proofdotseparation
                        {\leaders\box1\vfill}$%
                \unhbox\proofrulename}%
\else   \dimen6=\fontdimen22\the\textfont2 % height of maths axis
        \dimen7=\dimen6
        \advance\dimen6by.5\proofrulebreadth
        \advance\dimen7by-.5\proofrulebreadth
        \setbox0=\hbox{%
                \kern\shortenproofleft
                \ifdoubleproof
                \then   \hbox to\dimen0{%
                        $\mathsurround0pt\mathord=\mkern-6mu%
                        \cleaders\hbox{$\mkern-2mu=\mkern-2mu$}\hfill
                        \mkern-6mu\mathord=$}%
                \else   \vrule height\dimen6 depth-\dimen7 width\dimen0
                \fi
                \unhbox\proofrulename}%
        \ht0=\dimen6 \dp0=-\dimen7
\fi
%
% set up to centre outermost tree only
\let\doll\relax
\ifwasinsideprooftree
\then   \let\VBOX\vbox
\else   \ifmmode\else$\let\doll=$\fi
        \let\VBOX\vcenter
\fi
% this \vbox or \vcenter is the actual output:
\VBOX   {\baselineskip\proofrulebaseline \lineskip.2ex
        \expandafter\lineskiplimit\ifproofdots0ex\else-0.6ex\fi
        \hbox   spread\dimen5   {\hfi\unhbox\proofabove\hfi}%
        \hbox{\box0}%
        \hbox   {\kern\dimen2 \box\proofbelow}}\doll%
%
% pass new indentations out of scope
\global\dimen2=\dimen2
\global\dimen3=\dimen3
\egroup % NESTED ZERO
\ifonleftofproofrule
\then   \shortenproofleft=\dimen2
\fi
\shortenproofright=\dimen3
%
% some space on right and flag we've just made a tree
\onleftofproofrulefalse
\ifinsideprooftree
\then   \hskip.5em plus 1fil \penalty2
\fi
}

%==========================================================================
% IDEAS
% 1.    Specification of \shiftright and how to spread trees.
% 2.    Spacing command \m which causes 1em+1fil spacing, over-riding
%       exisiting space on sides of trees and not affecting the
%       detection of being on the left or right.
% 3.    Hack using \@currenvir to detect LaTeX environment; have to
%       use \aftergroup to pass \shortenproofleft/right out.
% 4.    (Pie in the sky) detect how much trees can be "tucked in"
% 5.    Discharged hypotheses (diagonal lines).

\title{From Semantics to Types: the Case of the Imperative $\lambda$-Calculus}
\author{Ugo de'Liguoro
%	\institute{Università di Torino\\ Torino, Italy}
	\institute{Department of Computer Science\\ Università di Torino\\
		Torino, Italy}
	\email{ugo.deliguoro@unito.it}
	\and
	Riccardo Treglia
%	\institute{Università di Torino\\ Torino, Italy}
	\institute{Department of Computer Science\\ Università di Torino\\
		Torino, Italy}
	\email{\quad riccardo.treglia@unito.it }
}

\begin{document}
	\maketitle

\begin{abstract} 
  We propose an intersection type system for an imperative $\lambda$-calculus based
  on a state monad and equipped with algebraic operations to read and write to the store.
  The system is derived by solving a suitable domain equation in the category of $\omega$-algebraic lattices;
  the solution consists of a filter-model generalizing the well known construction for ordinary $\lambda$-calculus. Then
  the type system is obtained out of the term interpretations into the filter-model itself.
  The so obtained type system satisfies the ``type-semantics'' property, and it is sound and complete by construction.
  %\texttt{http://www.elsevier.com/locate/entcs}.
\end{abstract}
%\begin{keyword}
%  lambda calculus, state monad, algebraic effects, intersection types, filter-model.
%\end{keyword}

%!TEX root = DenSemIm-Main.tex

\section{Introduction}\label{sec:intro}

Since Strachey and Scott's work in the 60's, $\lambda$-calculus and denotational semantics, together with logic and type theory, have been recognized
as the mathematical foundations of programming languages. Nonetheless, there are aspects of actual programming languages that have shown to be
quite hard to treat, at least with the same elegance as the theory of recursive functions and of algebraic data structures; a prominent case is surely
side-effects.

The introduction of notions of computation as monads by Moggi in \cite{Moggi'91} greatly improved the understanding of ``impure'', namely non functional
features in the semantics of programming languages, by providing a unified framework for treating computational effects: see \cite{Wadler92,Wadler-Monads},
the introductory \cite{BentonHM00}, and the large body of bibliography thereafter; a gentle introduction and references can be found e.g. in \cite{amsdottorato9075}.

Starting with \cite{deLiguoroTreglia20}, we have considered an untyped variant of Moggi's computational $\lambda$-calculus, with two sorts of terms: {\em values} denoting points of
some domain $D$, and {\em computations} denoting points of $TD$, where $T$ is some generic monad and $D \cong D \to TD$. The goal  was
to show how such a calculus can be equipped with an operational semantics and an intersection type system, such that types are invariant under reduction and expansion
of computation terms, and convergent computations are characterized by having non-trivial types in the system.

Our approach was limited to a pure calculus without constants, where the unit and bind operations are axiomatized by the monadic laws from  \cite{Wadler-Monads}.
On the other hand, such operations may
model how morphisms from values to computations compose, but do not tell anything
about how the computational effects are produced. 
In the theory of {\em algebraic effects} \cite{PlotkinP02,PlotkinP03,Power06}, Plotkin and Power have shown under which conditions effect operators
live in the category of algebras of a computational monad, which is isomorphic to the category of models of certain equational specifications, 
namely varieties in the sense of universal algebra \cite{Hyland2007TheCT}.

Toward a general theory of intersection type systems for (untyped) computational $\lambda$-calculi with algebraic effects, we have studied in the draft 
\cite{deLiguoroTreglia2021} the case of the imperative $\lambda$-calculus $\lamImp$, both because the modeling of a functional calculus with (implicitly higher-order) 
side-effects is of interest per se, and because it is a case study to test our claim that an intersection types assignment system can be defined and that it is useful to formalize
reasoning about the calculus itself.

%The main result in \cite{deLiguoroTreglia2021}  is that such a system for $\lamImp$ exists, and that (closed) terms that are meant to be convergent w.r.t. their natural operational semantics
%can be characterized via their typings in the system. This is akin to the lazy $\lambda$-calculus \cite{Abramski-Ong'93}, where
%convergent terms are characterized  by the type $\omega \to \omega$ in the exogenous logic of the calculus, in the sense of \cite{Abramsky'91}. While deferring
%the exposition of the convergence characterization theorem to a companion paper, we focus here on the construction of the type system itself. 

Indeed the type system, which we present in the very last section of this work, has not been invented from scratch; on the contrary, it has been constructed moving from the
domain equation $D = D \to \GS D$, where $\GS$ is (a variant of) the state monad. The method we follow is to solve such equation in the category of $\omega$-algebraic lattices, whose
objects and morphisms can be described via intersection type theories, that are the ``logic'' of such domains in the sense of  \cite{Abramsky'91}: see also
\cite{alesdezahons04}. 
The solution is constructed via the description of compact points of such a $D$, implicitly mirroring the inverse limit construction, but in a conceptually simpler way, consisting
of an inductive definition of types and of their preorders.
Once this has been obtained, subtyping and typing rules are derived in a uniform way, leading to a type assignment system which enjoys the 
``type-semantics'' property, namely that  denotational semantics of terms is fully characterized by their typings in the system. 
Soundness and completeness of the type system now follow straightforwardly.

\myparagraph{Summary and results} In section \ref{sec:calculus} we outline the syntax of the untyped imperative $\lambda$-calculus $\lamImp$, and stress the difference with other calculi in literature. In section \ref{sec:denotational} we deal with solving the domain equation thoroughly. To do this we recall the concept of computational monad, then we consider the state and partiality monad $\GS$. In the same section, we tackle the solution of the domain equation by breaking the circularity that arises out of the definition of the monad $\GS$ itself. We conclude the section by modelling algebraic operators over the monad $\GS$ and formalizing the model of $\lamImp$.
In section \ref{sec:intersection}, we solve the domain equation by inductively constructing type theories that induce a filter model. 

Section \ref{sec:deriving} explains how to derive the type assignment system from the filter-model construction. 
We conclude establishing the type semantics theorem for our calculus. Finally,
section \ref{sec:related} is devoted to the discussion of our results and to related works.

\medskip
We assume familiarity with $\lambda$-calculus and intersection types; a comprehensive reference is \cite{BarendregtDS2013} Part III. 
Some previous knowledge about computational monads and algebraic effects is useful:
see e.g. \cite{BentonHM00} and \cite{amsdottorato9075} chap. 3.

\if false

In the study of such

We move from a $\lambda$-calculus enriched with side effects, which is formalized as a computational $\lambda$-calculus in the sense of Moggi with algebraic operators
\`a la Plotkin-Power. Using a variant of the monad $\GS$ we model the calculus into a solution of the domain equation $D = D \to \GS D$, providing
an interpretation map $\SemD{\cdot}$.

Next we build a "logic" of the calculus  in the sense of domain logic, constructing a filter model $\Filt^D$. Finally, by particularizing the interpretation map to $\Filt^D$, we
recover a type assignment system enjoying the type semantics property by construction, hence sound and complete.

------------------\\
%%%%%%%%%%%%%%%%%%%%%%%%%%%%%%%%%%%%%%
Since Strachey and Scott's work in the 60's, $\lambda$-calculus and denotational semantics, together with logic and type theory, have been recognized
as the mathematical foundations of programming languages. Nonetheless, there are aspects of actual programming languages that have shown to be
quite hard to treat, at least with the same elegance as the theory of recursive functions and of algebraic data structures; a prominent case is surely
side-effects.

In \cite{Moggi'91} Moggi proposed a unified framework to reason about $\lambda$-calculi embodying various kinds of effects, including side-effects,
that has been used by Wadler \cite{Wadler92,Wadler-Monads} to cleanly implement non-functional aspects into Haskell, a purely functional programming language. Moggi's
approach is based on the categorical notion of {\em computational monad}: instead of adding  impure effects to the semantics 
of a pure functional calculus, 
effects are subsumed by the abstract concept of ``notion of computation'' represented by the monad $T$. 
To a domain $A$ of values it is associated the domain $TA$ of computations over $A$,
that is an embellished structure in which $A$ can be merged, and such that any morphism $f$ from $A$ to $TB$ extends by a universal construction 
to a map $f^T$ from $TA$ to $TB$. 

Monadic operations of merging values into computations,  \ie the {\em unit} of the monad $T$, and of extension, 
model how morphisms from values to computations compose, but do not tell anything
about how the computational effects are produced. 
In the theory of {\em algebraic effects} \cite{PlotkinP02,PlotkinP03,Power06}, Plotkin and Power have shown under which conditions effect operators
live in the category of algebras of a computational monad, which is isomorphic to the category of models of certain equational specifications, 
namely varieties in the sense of universal algebra \cite{Hyland2007TheCT}.

In \cite{deLiguoroTreglia20} we have considered an untyped computational $\lambda$-calculus with two sorts of terms: {\em values} denoting points of
some domain $D$, and {\em computations} denoting points of $TD$, where $T$ is some generic monad and $D \cong D \to TD$. The goal  was
to show how such a calculus can be equipped with an operational semantics and an intersection type system, such that types are invariant under reduction and expansion
of computation terms, and convergent computations are characterized by having non-trivial types in the system.

Here, we extend our approach and consider a variant of the {\em state monad} from \cite{Moggi'91} and a calculus with two families of operators,
indexed over a denumerable set of {\em locations}: $\get{\ell}{\lambda x.M}$ reading
the value $V$ associated to the location $\ell$ in the current {\em state}, and binding $x$ to $V$ in $M$; $\set{\ell}{V}{M}$ which modifies the state assigning $V$ to $\ell$,
and then proceeds as $M$. This calculus, with minor notational differences, is called {\em imperative $\lambda$-calculus} in \cite{amsdottorato9075}.

Types and type assignment system are derived from the domain equation defining $D$ and $\GS D$, following the method of domain logic in \cite{Abramsky'91}. 
Type and typing rule definitions are guided along the path of a well understood mathematical method, which indicates both how type syntax and the subtyping relations
are constructed and  how to shape type assignment rules. The result is an extension of Curry style intersection type assignment system: see
\cite{BarendregtDS2013} Part III.

\paragraph{Filter Model.}
As a first step, we construct a filter model  $\Filt_D$ that is isomorphic in a category of domains to $D \to \GS D$, where $\GS$ is the {\em monad of partiality and state} in \cite{LagoGL17}.

\paragraph{Deriving type assignment system.} Not towards completeness, but completeness as a natural consequence of the construction.

\paragraph{Results.}

%This amounts to say that $T$ is a (strong) monad and operations over $TA$ can be seen as algebras of the monad $T$.

\medskip
We assume familiarity with $\lambda$-calculus and intersection types; a comprehensive reference is \cite{BarendregtDS2013} Part III. 
Some previous knowledge about computational monads and algebraic effects is useful:
see e.g. \cite{BentonHM00} and \cite{amsdottorato9075} chap. 3. 

\fi

%!TEX root = DenSemIm-Main.tex

\section{An untyped imperative $\lambda$-calculus: $\lamImp$}
\label{sec:calculus}

Imperative extensions of the $\lambda$-calculus, both typed and type-free, are usually
based on the call-by-value $\lambda$-calculus, and enriched with constructs for reading and writing to
the store. Aiming at exploring the semantics of side effects in computational calculi, where 
``impure'' functions are modeled by pure ones sending values to computations in the sense of 
\cite{Moggi'91}, we consider the calculus introduced in \cite{deLiguoroTreglia20}, to which
we add syntax denoting algebraic effect operations \`{a} la Plotkin and Power \cite{PlotkinP02,PlotkinP03,Power06}
over a suitable state monad. 

Let $\Label = \Set{\ell_0, \ell_1, \ldots}$ be a denumerable set of abstract {\em locations}. Borrowing notation from 
\cite{amsdottorato9075}, chap. 3, we consider denumerably many operator symbols
 $\textit{get}_{\ell}$ and $\textit{set}_{\ell}$, obtaining:

\begin{definition} %[Term syntax]
\label{def:syntax}
The syntax of the untyped imperative $\lambda$-calculus $\lamImp$ has two sorts of terms, which are defined by mutual induction as follows:
\[\begin{array}{rrcl@{\hspace{1cm}}l}
\ValTerm: & V,W & ::= & x \mid \lambda x.M \\ 
\ComTerm: & M,N & ::= & \unit{V} \mid M \Bind V  
                   \mid \get{\ell}{\lambda x.M} \mid \set{\ell}{V}{M} & (\ell \in \Label) \\ 
\end{array}\]
\end{definition}

\noindent
Terms are of either sorts $\ValTerm$ and $\ComTerm$, representing values and computations, respectively.
The variable $x$ is bound in $\lambda x.M$ and $\get{\ell}{\lambda x.M}$; 
terms are identified up to renaming of bound variables so that the capture avoiding substitution $M\Subst{V}{x}$ is always well defined;
$FV(M)$ denotes the set of free variables in $M$.
We call $\ValTerm^{\,0}$ the subset of closed $V \in \ValTerm$; similarly for $\ComTerm^0$.

Terms of the shape $\unit{V}$ and $M \Bind V$ are  written $\texttt{return } V$ and $M \texttt{ >>= } V$ in Haskell-like syntax, respectively.
With respect to \cite{deLiguoroTreglia20} we write $\unit{V}$ instead of $\Unit V$.
With respect to the syntax of the ``imperative $\lambda$-calculus'' in \cite{amsdottorato9075}, 
we do not have the {\it let} construct, nor the application $VW$ among values.
These constructs are indeed definable: $\Let{x}{M}{N} \;\equiv\; M \Bind (\lambda x.N)$ and $VW \;\equiv\; \unit{W} \Bind V$ (where $\equiv$ is syntactic identity).
In general, application among computations can be encoded by $MN \equiv M \Bind (\lambda z. \ N \Bind z)$, where $z$ is  fresh.

In a sugared notation from functional programming languages, we could have written location dereferentiation and assignment, respectively, as follows:
\[ \Let{x}{\; !\ell}{M} ~\equiv~ \get{\ell}{\lambda x.M}  \qquad 
 \ell := V; M ~\equiv~ \set{\ell}{V}{M} \]
Observe that we do not consider locations as
values; consequently they cannot be dynamically created like with the \textbf{ref} operator from ML,  nor
is it possible to model aliasing. On the other hand, since the calculus is untyped, ``strong updates'' are allowed (differently than \cite{BentonKBH09}). By this, we mean that when evaluating $\set{\ell}{V}{M}$,
the value $V$ to which the location $\ell$ is updated  bears no relation
to the previous value, say $W$, of $\ell$ in the store: indeed, in our type assignment system $W$ and $V$ may well have completely different types.

%\input{Operational}
%!TEX root = DenSemIm-Main.tex

\section{Denotational semantics}
\label{sec:denotational}

\renewcommand{\GSext}[1]{#1^\dagger}
\renewcommand{\GSeta}{\eta}

We illustrate a denotational semantics of the calculus in the category of domains.
The model is based on the solution of the domain equation:
\begin{equation}\label{eq:main-domain-eq}
D = [D \to[ S \to (D \times S)_\bot]]
\end{equation}
where $S$ is a suitable space of stores over $D$, and $\GS \,D = [S \to (D \times S)_\bot]$ is a variant of
the state monad in \cite{Moggi'91}, which is called the partiality and state monad in \cite{LagoGL17}.

%This is an expository and rather technical section, illustrating the mathematical grounds of the calculus
%and of the type assignment system to be defined in the next section, but it is not technically necessary to the main results in this paper.
%Readers unfamiliar with category and domain theory may take for granted that a solution of \cref{eq:main-domain-eq}
%exists and just have a look to the last paragraph of this section, where we define term interpretation and list some basic equations
%holding in the model.

Before embarking on solving the equation (\ref{eq:main-domain-eq}), let us recall definitions of monads, of $\GS$, and algebraic operators, and fix notations.

%Let us recall the definition of a {\em monad} over an arbitrary category, in the form of a Kleisli triple. % as in \cite{Moggi'91}.
%
%\begin{definition}\label{def:monad}
%A {\em monad} over a category $\CatC$ is a triple $(T, \eta, \_^\dagger)$ where $T$ is a mapping over the class $\CatC_0$ of objects of $\CatC$,
%$\eta = \Set{\eta_X}_{X \in \CatC_0}$ is a collection of morphisms $\eta_X:X \to TX$; finally $\_^\dagger$ sends any morphism $f:X \to TY$ of $\CatC$
%to $f^\dagger:TX \to TY$, and it is such that:
%\begin{equation}\label{eq:Kleisli-triple}
%(\eta_X)^\dagger = \Id_{T X} \quad f^\dagger \circ \eta_X = f \quad (g^\dagger \circ f)^\dagger = g^\dagger \circ f^\dagger
%\end{equation}
%\end{definition}
%\noindent
%A consequence of this definition is that $f^\dagger$ is the unique morphism from $TX$ to $TY$ factoring $f:X \to TY$ through $\eta_X$;
%we call $f^\dagger$ the {\em extension} of $f$ to the domain $TX$.
%The mapping $T$ is actually a functor over $\CatC$, with morphism part $Tf = (\eta_Y \circ f)^\dagger: TX \to TY$ when $f: X \to Y$.
%Moreover we have that $\eta_Y \circ f = T f \circ \eta_X$, so that $\eta:\Id \to T$ is a natural transformation; it is called the {\em unit} of $T$.
\myparagraph{The partiality and state monad}
We recall Wadler's type-theoretic definition of monads \cite{Wadler92,Wadler-Monads}, that is at the basis of their successful implementation
in Haskell language, a natural interpretation of the calculus is into a cartesian closed category (ccc), such that
two families of combinators, or a pair of polymorphic operators called the ``unit'' and the ``bind'', exist satisfying the {\em monad laws}.
In what follows, $\CatC$ will be  a {\em concrete ccc}. Examples are the category $\CatDom$ of Scott domains with continuous functions, and
its full subcategory $\ALG$ of algebraic lattices with a countable basis.

\begin{definition}\label{def:Cmonad}
{\bf (Computational Monad \cite{Wadler-Monads} \S 3)}
	Let $\CatC$ be a concrete ccc. A {\em functional computational monad}, henceforth just {\em monad}
	over $\CatC$ is a triple $(T, \Unit\!, \Bind)$ where
	$T$ is a map over the objects of $\CatC$, and $\Unit\!$ and $\Bind$ are families of morphisms
	\[\begin{array}{c@{\hspace{2cm}}c}
		\UnitSub{A}: A \to TA & \Bind_{A,B}: TA \times (TB)^A  \to TB
	\end{array}\]
	such that, writing $\Bind_{A,B}$ as an infix operator and omitting subscripts:
	\[\begin{array}{l@{\hspace{0.5cm}}rcl@{\hspace{0.5cm}}l}
		\LeftUnit: & \Unit a \Bind f & = & f\,a \\
		\RightUnit: & m \Bind \Unit & = & m \\
		\Assoc: & (m \Bind f) \Bind g & = & m \Bind \metalambda d. (f\,d \Bind g)
	\end{array}\]
	where
%	 $\metalambda$ is functional abstraction in the metalanguage:
 $\metalambda d. (f\,d \Bind g)$ is the lambda notation for $d \mapsto f\,d \Bind g$.
\end{definition}

\begin{remark} In case of concrete ccc, Def. \ref{def:Cmonad} coincides with the notion of {\em Kleisli triple} $(T, \eta, \_^\dagger)$ (see e.g. \cite{Moggi'91}, Def. 1.2).
The equivalence is obtained by relating maps $\eta$ and $\_^\dagger$ to $\Unit$ and $\Bind$ operators, as follows:
\[ \UnitSub{X} = \eta_X  \qquad \qquad a \Bind f = f^\dagger (a) \]
The first equality is just a notational variation of the coercion of values into computations. The latter, instead, establishes the connection between the $\Bind$ operator with the map $\_^\dagger$, also called {\em extension operator}: if $f:X \to TY$ then $f^\dagger:TX \to TY$ is the unique map such that $f = f^\dagger \circ \eta_X$.
\end{remark}

Now, let us look closer how monads model effectful computations. We focus on $\CatDom$, 
a full subcategory of pointed Dcpo's with Scott continuous functions as morphisms.
Then $TX$ is the domain, or the type, of computations with values in $X$. In general $TX$ has a richer structure
than $X$ itself, modeling partial computations, exceptions, non-determinism etcetera, including side effects. 
The mapping $\UnitSub{X}:X \to TX$ is interpreted as the trivial computation  $\UnitSub{X}(x)$ just returning the value $x$, and it is an embedding if
$T$ satisfies the requirement that all the $\UnitSub{X}$ are monos. 
Now a function $f:X \to TY$ models an ``impure'' program $P$ with input in $X$ and output in $Y$
via a pure function returning the computation $f(x) \in TY$. 

To relate equation (\ref{eq:main-domain-eq}) to the monad $\GS$ we have to be more precise about the domain $\Store$. We consider $F X =  X^{\Label}$ the domain of {\em stores} over $X$ with locations in $\Label$, namely the set $X^{\Label}$, ranged over by $\varsigma$, of maps from locations to points of $X$; $FX$ is a domain with the order inherited from $X$. The map $FX$ obviously extends to a locally continuous functor by setting $F(g) = g^{\Label} = g \circ \_:X^\Label \to Y^\Label$ for any $g:X \to Y$.

We are now in place to properly define the partiality and state monad $\GS$ in terms of Def. \ref{def:Cmonad}.

\begin{definition}\label{def:statemon}
{\bf (Partiality and state monad)}
	Given the domain $S = X^\Label$ representing a notion of state, we define the partiality and state monad $(\GS, \Unit, \Bind)$, as the mapping
	\[\GS\,X = [S \to (X \times S)_\bot]\]
	where $(X \times S)_\bot$ is the lifting of the cartesian product $X \times S$, equipped with two (families of) operators $\Unit$ and $\Bind$ defined as follows:
	
	\[\begin{array}{l@{\hspace{0.5cm}}rcl@{\hspace{0.5cm}}l}
		\Unit x::= \metalambda \finMap. (x,\finMap)\\
		(c\Bind f)\finMap =	f^\dagger(c)(\finMap) ::= \left \{
		\begin{array}{lll}
			f \,(x) (\finMap')\, & \mbox{if $c(\finMap)= (x, \finMap')\neq \bot$} \\
			\bot                & \mbox{if $c(\finMap) =  \bot$}
		\end{array} \right.
	\end{array}\]
where we omit subscripts.
\end{definition}

%This is well defined because $\CatDom$ is a cartesian closed category, where we freely write the exponent $B^A$ as the type $A \to B$.

\begin{remark}
A function $f:X \to \GS\,Y$ has type $X \to S \to (Y \times S)_\bot$, which is isomorphic to $(X \times S) \to (Y \times S)_\bot$;
if $f $ is the interpretation of an ``imperative program'' $P$, and it is the currying of $f'$, then we expect that  
$f\,x\,\finMap = f'(x,\finMap) = (y, \finMap')$ if $y$ and $\finMap'$ are respectively the value and the final store obtained from
the evaluation of $P(x)$ starting in $\finMap$, if it terminates; $f\,x\,\finMap = f'(x,\finMap) = \bot$ otherwise. Clearly $f'$ is the familiar
interpretation of the imperative program $P$ as a state transformation mapping.
\end{remark}

\myparagraph{A domain equation} Evidently, the domain $FX$ depends on $X$ itself; in contrast, while defining $\GS\,X$ the
domain $S$ has to be fixed, since otherwise the definition of the $\Bind$ operator does not make sense, and $\GS$ 
is not even a functor. A solution would be to take $S = FD$, where $D \cong [ D \to \GS\,D]$, but this is clearly circular.

To break the circularity, we define the mixed-variant bi-functor $G: \CatDom^\textit{op} \times \CatDom \to \CatDom$ by
\[G(X,Y) = [FX \to (Y \times FY)_\bot] \]
whose action on morphisms is illustrated by the diagram:
\[
\begin{diagram}[width=4em]
FX'                            & \rTo^{Ff} & FX\\
\dDashto^{G(f,g)(\alpha)} & & \dTo_{\alpha} \\
 (Y' \times FY')_\bot & \lTo_{\quad (g \times Fg)_\bot \quad} & (Y \times FY)_\bot\\
\end{diagram}
\]
where $f:X' \to X$, $g:Y \to Y'$ and $\alpha \in G(X,Y)$. Now it is routine to prove that
$G$ is locally continuous so that, by the inverse limit technique, 
we can find the initial solution to the domain equation (which is the same as \Cref{eq:main-domain-eq}):
\begin{equation}\label{eq:domain-equation}
D = [D \to G(D,D)]
\end{equation}
In summary we have:

\begin{theorem}\label{thm:domain-equation}
There exists a domain $D$ such that the state monad $\GS_D$ with state domain 
$S = [\Label \to D]$ is a solution in $\CatDom$ to the domain equation:
\[D = [D \to \GS_D\,D]\]
Moreover, it is initial among all solutions to such equation.
\end{theorem}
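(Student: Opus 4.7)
The plan is to invoke the standard inverse-limit (bilimit) technique of Scott, in the form due to Smyth and Plotkin, for mixed-variant locally continuous bifunctors on $\CatDom$. First I would confirm the assertion of local continuity of $G$: since $G$ is obtained by composing the well-known locally continuous constructors $F(X) = [\Label \to X]$ (a countable power), binary product $\_ \times \_$, lifting $(\_)_\bot$, and the function space $[\_ \to \_]$, with the variances already recorded in the diagram in the excerpt, local continuity follows by a routine verification on each component. Then I would bundle the outer function space in the equation into $G$ by defining the bifunctor $H \colon \CatDom^{\textit{op}} \times \CatDom \to \CatDom$ by $H(X,Y) = [X \to G(X,Y)]$, which is again locally continuous, so the target equation becomes $D \cong H(D,D)$.

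Next I would build the initial solution by iteration. Starting from $D_0 = \{\bot\}$ with the evident embedding-projection pair $(e_0, p_0) : D_0 \rightleftarrows H(D_0, D_0)$, set $D_{n+1} = H(D_n, D_n)$ and propagate e-p pairs by
\[
e_{n+1} = H(p_n, e_n), \qquad p_{n+1} = H(e_n, p_n).
\]
Let $D$ be the bilimit $\varprojlim (D_n, p_n)$ in $\CatDom$. By local continuity of $H$, the canonical mediating map $H(D,D) \to D$ induced by the cocone of the $e_n$'s is an isomorphism, giving $D \cong H(D,D) = [D \to G(D,D)]$.

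It then remains to identify $G(D,D)$ with $\GS_D D$. Taking $S = FD = [\Label \to D]$, we have by definition
\[
G(D,D) \;=\; [FD \to (D \times FD)_\bot] \;=\; [S \to (D \times S)_\bot] \;=\; \GS_D D,
\]
so the isomorphism becomes $D \cong [D \to \GS_D D]$ as required. Finally, initiality among solutions is the standard universal property of the bilimit: any pair $(D', \phi)$ with $\phi \colon D' \xrightarrow{\cong} H(D',D')$ determines, by iterated application of $\phi$ and $\phi^{-1}$ together with the unique map $D_0 \to D'$, a cone over $(D_n, p_n)$, and hence a unique homomorphism of solutions $D \to D'$ commuting with the structure maps.

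The main technical delicacy is the local continuity of $G$, since this is the step that justifies the inverse-limit construction; but because $F$, $\_\times\_$, $(\_)_\bot$ and $[\_\to\_]$ are all classical locally continuous constructors and $G$ is a plain composition of them with the correct variance, this reduces to unfolding definitions. After that, the remainder of the argument is an instance of a well-understood schema and involves no calculations beyond rewriting $G(D,D)$ as $\GS_D D$.
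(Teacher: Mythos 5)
Your proposal is correct and follows essentially the same route as the paper: the paper likewise reduces the statement to the initial solution of $D = [D \to G(D,D)]$ obtained by the inverse-limit technique for the locally continuous mixed-variant bifunctor $G$, and then observes that setting $S = FD = D^{\Label}$ gives $\GS_D\,D = G(D,D)$. You merely spell out the bilimit construction and the local-continuity check that the paper declares routine, so there is nothing to add.
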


\begin{proof} Take $D$ to be the (initial) solution to \Cref{eq:domain-equation};
now if $S = FD = D^\Label$ then $\GS_D\,D = G(D,D)$.
\end{proof}

\myparagraph{Algebraic operations over $\GS$}
Monads are about composition of morphisms of some special kind.
However, thinking of $f:X \to \GS X$ as the meaning of a program
with side effects doesn't tell anything about side effects themselves, that are produced by reading and writing
values from and to stores in $S$. 

To model effects associated to a monad, Plotkin and Power have proposed in \cite{PlotkinP02,PlotkinP03,Power06}
a theory of {\em algebraic operations}. A gentle introduction to the theory can be found in 
\cite{amsdottorato9075}, chap. 3, from which we borrow the notation.

Suppose that $T$ is a monad over a category $\CatC$ with
terminal object $\textbf{1}$ and all finite products; then an algebraic operation $\textbf{op}$ with arity $n$ is a family of
morphisms $\textbf{op}_X:(TX)^n \to TX$, where $(TX)^n = TX \times \cdots \times TX$ is the $n$-times product of $TX$ with itself, such that
\begin{equation}\label{eq:n-ary-op}
\textbf{op}_X \circ \Pi_n f^\dagger = f^\dagger \circ \textbf{op}_X
\end{equation}
where $f: X \to TX$ and $\Pi_n f^\dagger = f^\dagger \times \cdots \times f^\dagger: (TX)^n \to (TX)^n$.
In case of the concrete category $\CatDom$, $\textbf{1}$ is a singleton and products are sets of tuples. 
Then $\textbf{op}_X$ is an operation of arity $n$ of an algebra with carrier $TX$; since $f^\dagger: TX \to TX$,
we have that $(\Pi_n f^\dagger)\tuple{x_1, \ldots , x_n} = \tuple{f^\dagger(x_1), \ldots , f^\dagger(x_n)}$, and \Cref{eq:n-ary-op}
reads as:
\[\textbf{op}_X(f^\dagger(x_1), \ldots , f^\dagger(x_n)) = f^\dagger(\textbf{op}_X(x_1, \ldots , x_n))\]
namely the functions $f^\dagger$ are homomorphisms w.r.t. $\textbf{op}_X$.

Algebraic operations $\textbf{op}_X:(TX)^n \to TX$ do suffice in case $T$ is, say, the nondeterminism or the output monad, but cannot
model side effects operations in case of the store monad. This is because read and write operations implement a bidirectional action
of stores to programs and of programs to stores. What is needed instead is the generalized notion of operation proposed in \cite{PlotkinP02} (and further studied in \cite{HylandP06}); such construction,
that is carried out in a suitable enriched category,  can be instantiated to the case of $\CatDom$ as follows:
\[\textbf{op}:  P \times (TX)^A \to TX \cong (TX)^A \to (TX)^P\]
where $P$ is the domain of parameters and $A$ of generalized arities, and the rightmost ``type'' is the interpretation in $\CatDom$ of Def. 1 in \cite{PlotkinP02}. 
For such operations \Cref{eq:n-ary-op} is generalized as follows (see \cite{amsdottorato9075}, Def. 13):
\begin{equation}\label{eq:gen-op}
\textbf{op}(p,k) \Bind f = 
f^\dagger (\textbf{op}(p,k)) = \textbf{op}(p, f^\dagger \circ k)
= \textbf{op}(p, \metalambda x. (k(x) \Bind f))
\end{equation}
where $f: X \to TX$, $p \in P$ and $k:A \to TX$.

\myparagraph{Denotational semantics of terms}
Given the monad 
$(\GS, \eta, \_^\dagger)$ and the domain $D$ from Theorem \ref{thm:domain-equation}, 
we first interpret the constants $\textit{get}_{\ell}$ and $\textit{set}_{\ell}$ as generalized operations over $\GS$. 
By taking $P = \textbf{1}$ and $A = D$ we define:
\[\Sem{\textit{get}_{\ell}}:  \textbf{1} \times (\GS D)^D \to \GS D \simeq (\GS D)^D \to \GS D\quad
\mbox{ by } \quad  \Sem{\textit{get}_{\ell}} \, d \, \storeMap = d\, (\storeMap \,( \ell))\, \storeMap\] 
where $d \in D$ is identified with its image in $D \to \GS\,D$, and $\storeMap \in S = [\Label \to D]$.
On the other hand, taking $P = D$ and $A = \textbf{1}$, we define:
\[\Sem{\textit{set}_{\ell}}: D \times (\GS D)^{\textbf{1}} \to \GS D \simeq D \times \GS D \to \GS D \quad
\mbox{ by }  \Sem{\textit{set}_{\ell}}(d, c)\, \storeMap = c (\storeMap[\ell \mapsto d])\] 
where $c \in \GS D = S \to (D \times S)_\bot$ and $\storeMap[\ell \mapsto d]$ is the store
sending $\ell$ to $d$ and it is equal to $\storeMap$ otherwise.

Then we interpret values from $\ValTerm$ in $D$
and computations from $\ComTerm$ in $\GS\,D$. More precisely we define the maps
$\SemD{\cdot}: \ValTerm \to \Env \to D$ and 
$\SemSD{\cdot}: \ComTerm \to \Env \to \GS\, D$,
where $\Env = \Var \to D$ is the set of environments interpreting term variables in $\Var$:

\begin{definition}\label{def:lambda-imp-model}
%[$\lamImp $- model]
\label{def:lam-imp-mod}
A \textbf{$ \lamImp $-model} is a structure $\Dcat = (D, \GS, \SemD{\cdot},\SemSD{\cdot})$ such that:
\begin{enumerate}
	\item $D$ is a domain s.t. $D\cong D\to \GS D$ via $(\Phi,\Psi)$, where $\GS$ is the partiality and state monad of stores over $D$;
	\item for all $e \in \Env$, $V \in \ValTerm$ and $M \in \ComTerm$:
		\[\begin{array}{ll}
			\Sem{x}^D e  =  e (x) &\quad
			\SemD{\lambda x.M} e  = \Psi( \metalambda d \in D. \, \SemSD{M} {e[x  \mapsto d]} )\\ [1mm]
			\SemSD{\, \unit{V} \,} e = \Unit \, (\SemD{V} e) & \quad
			\SemSD{M \Bind V} e  = (\SemSD{M} e)\Bind \Phi (\SemD{V} e )\\ [1mm]
			\SemSD{ \get{\ell}{\lambda x.M}} e =  \Sem{\textit{get}_{\ell}} \, \Phi (\SemD{\lambda x.M} e)  & \quad
			\SemSD{ \set{\ell}{V}{M} } e  = \Sem{\textit{set}_{\ell}}(\SemD{V} e, \SemSD{M}  e) 
		\end{array}\]
\end{enumerate}
\end{definition}
%\begin{definition}[Term interpretation]\label{def:term-interp}
%%Let $\Env = \Var \to D$ the set of {\em environments}; then the mappings
%%\[\SemD{\cdot}: \ValTerm \to \Env \to D \quad \SemSD{\cdot}: \ComTerm \to \Env \to \GS D\]
%%are defined as follows, where $e \in \Env$ and $\storeMap \in \Store$:
%\[\begin{array}{rcl}
%\Sem{x}^D e & = & e (x) \\ [1mm]
%\SemD{\lambda x.M} e & = & \lambda d \in D. \, \SemSD{M} {e[x  \mapsto d]} \\ [1mm]
%\SemSD{\, \unit{V} \,} e & = & \GSeta_D \, (\SemD{V} e) \\ [1mm]
%\SemSD{M \Bind V} e  & = & \GSext {(\SemD{V} e )} (\SemSD{M} e) \\ [1mm]
%\SemSD{ \get{\ell}{\lambda x.M}} e & = &  \Sem{\textit{get}_{\ell}}  (\SemD{\lambda x.M} e)  \\ [1mm]
%\SemSD{ \set{\ell}{V}{M} } e  & = & \Sem{\textit{set}_{\ell}}(\SemD{V} e, \SemSD{M}  e) 
%\end{array}\]
%\end{definition}
By unravelling definitions and applying to an arbitrary store $\storeMap \in S$, the last two clauses can be written:
\[\begin{array}{rcl}
\SemSD{ \get{\ell}{\lambda x.M}} e \, \storeMap & = & \SemSD{M} (e [x  \mapsto \storeMap(\ell)]) \, \storeMap \\ [1mm]
\SemSD{ \set{\ell}{V}{M} } e \, \storeMap & = & \SemSD{M}  e \, (\storeMap[\ell \mapsto \SemD{V} e])
\end{array}\]

We say that the equation $M = N$ is {\em true} in $\Dcat$, written $\Dcat \models M = N$, if $\SemSD{M} e = \SemSD{N} e$ for all $e \in \Env$.

\begin{proposition}\label{prop:true-eq}
The following equations are true in $\Dcat$:
\begin{enumerate}
\item \label{prop:true-eq-i} $\unit{V} \Bind (\lambda x.M) = M\Subst{V}{x}$
\item \label{prop:true-eq-ii} $M \Bind \lambda x. \unit{x} = M$
\item \label{prop:true-eq-iii} $(L \Bind \lambda x.M) \Bind \lambda y.N = L \Bind \lambda x. (M \Bind \lambda y. N)$
\item \label{prop:true-eq-iv} $\get{\ell}{\lambda x.M} \Bind W = \get{\ell}{\lambda x.(M \Bind W)}$
\item \label{prop:true-eq-v} $\set{\ell}{V}{M} \Bind W = \set{\ell}{V}{M \Bind W}$
\end{enumerate}
where $x \not \in FV(\lambda y.N)$  in (\ref{prop:true-eq-iii}) and $x \not\in FV(W)$ in (\ref{prop:true-eq-iv}).
\end{proposition}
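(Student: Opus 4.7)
The plan is to verify each of the five equations by unfolding the denotational clauses in Definition \ref{def:lambda-imp-model} and reducing each claim to one of the three monad laws of Definition \ref{def:Cmonad} for the partiality and state monad $\GS$, or to the algebraic-operation identity \Cref{eq:gen-op} applied to the specific operations $\Sem{\textit{get}_\ell}$ and $\Sem{\textit{set}_\ell}$. Fix an arbitrary environment $e \in \Env$ throughout; the five equalities will be verified pointwise, and I will tacitly use the isomorphism $\Phi \circ \Psi = \id_{D \to \GS D}$ whenever I need to ``apply'' the denotation of an abstraction to an argument.

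For (\ref{prop:true-eq-i}), expand $\SemSD{\unit{V} \Bind \lambda x.M}e = \Unit(\SemD{V}e) \Bind \Phi(\SemD{\lambda x.M}e)$, invoke $\LeftUnit$ to rewrite this as $\Phi(\SemD{\lambda x.M}e)(\SemD{V}e)$, collapse $\Phi \circ \Psi$ to reach $\SemSD{M}(e[x \mapsto \SemD{V}e])$, and then conclude using the standard substitution lemma $\SemSD{M\Subst{V}{x}}e = \SemSD{M}(e[x \mapsto \SemD{V}e])$, which is proved by straightforward mutual induction on the grammar of $\ValTerm$ and $\ComTerm$. For (\ref{prop:true-eq-ii}), observe that $\Phi(\SemD{\lambda x.\unit{x}}e) = \metalambda d.\Unit d$ by definition, hence $\SemSD{M \Bind \lambda x.\unit{x}}e = \SemSD{M}e \Bind \Unit$, which equals $\SemSD{M}e$ by $\RightUnit$. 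For (\ref{prop:true-eq-iii}), unfold both sides to $(\SemSD{L}e \Bind \Phi(\SemD{\lambda x.M}e)) \Bind \Phi(\SemD{\lambda y.N}e)$ and $\SemSD{L}e \Bind (\metalambda d.\SemSD{M}(e[x\mapsto d]) \Bind \Phi(\SemD{\lambda y.N}e))$, using the hypothesis $x \notin FV(\lambda y.N)$ to ensure that the environment shift $e[x \mapsto d]$ does not alter $\Phi(\SemD{\lambda y.N}e)$; the two sides then coincide by $\Assoc$.

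For (\ref{prop:true-eq-iv}) and (\ref{prop:true-eq-v}), the key is that the interpretations $\Sem{\textit{get}_\ell}$ and $\Sem{\textit{set}_\ell}$ were set up precisely as generalized algebraic operations in the sense of \Cref{eq:gen-op}. Concretely, for (\ref{prop:true-eq-iv}) I would compute
\[
\SemSD{\get{\ell}{\lambda x.M} \Bind W}e \;=\; \Sem{\textit{get}_\ell}\bigl(\Phi(\SemD{\lambda x.M}e)\bigr) \Bind \Phi(\SemD{W}e),
\]
and then apply \Cref{eq:gen-op} with $p = *$, $k = \Phi(\SemD{\lambda x.M}e)$ and $f = \Phi(\SemD{W}e)$ to move the bind inside the argument of $\Sem{\textit{get}_\ell}$, recognizing the resulting $\metalambda d.(\SemSD{M}(e[x\mapsto d]) \Bind \Phi(\SemD{W}e))$ as $\Phi(\SemD{\lambda x.(M\Bind W)}e)$; the side condition $x \notin FV(W)$ is exactly what guarantees that $\Phi(\SemD{W}e)$ does not depend on the bound variable $d$. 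Equation (\ref{prop:true-eq-v}) is handled analogously, this time with $P = D$ and $A = \textbf{1}$, so that \Cref{eq:gen-op} specialises to $\Sem{\textit{set}_\ell}(d,c) \Bind f = \Sem{\textit{set}_\ell}(d,c \Bind f)$, matching the two sides of the equation directly.

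I do not expect any genuine obstacle. The only place that requires real bookkeeping is the substitution lemma invoked in (\ref{prop:true-eq-i}), which is entirely routine but must be established by simultaneous induction on values and computations since the two syntactic categories of $\lamImp$ are mutually defined; the rest consists of rewriting steps that mechanically reduce each equation to either a monad law or the algebraic-operation law \Cref{eq:gen-op}.
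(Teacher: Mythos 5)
Your proposal is correct and follows essentially the same route as the paper: the paper's own proof is ``by definition and straightforward calculations,'' and the remark placed right after the proposition confirms your decomposition --- items (1)--(3) are exactly the monad laws of Definition \ref{def:Cmonad}, and items (4)--(5) are instances of \Cref{eq:gen-op}. The one point worth making explicit is that for (4) and (5) you invoke \Cref{eq:gen-op} as if it were already established for the concretely defined $\Sem{\textit{get}_{\ell}}$ and $\Sem{\textit{set}_{\ell}}$; the paper never proves that separately, and its sample proof of (4) is precisely the direct store-pointwise computation (apply both sides to an arbitrary $\storeMap \in S$, unfold the clauses of Definition \ref{def:lambda-imp-model}, and use $\Sem{W}e' = \Sem{W}e$ where $e' = e[x \mapsto \storeMap(\ell)]$, which is where the hypothesis $x \not\in FV(W)$ enters) that establishes the needed instance. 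So you should either verify \Cref{eq:gen-op} for these two operations first or replace the appeal to it by that same short computation; your observation that (1) additionally needs a substitution lemma, proved by mutual induction on the two sorts of terms, is accurate and is a detail the paper leaves implicit.
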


\begin{proof}
By definition and straightforward calculations. For example, to see (\ref{prop:true-eq-iv}), let $\storeMap \in S$ be arbitrary
and $e' = e [x  \mapsto \storeMap(\ell)]$; then, omitting the apices of the interpretation mappings $\Sem{\cdot}$:
\[
\Sem{\get{\ell}{\lambda x.M} \Bind W} e \, \storeMap =
	\GSext{(\Sem{W} e)}(\Sem{M} e') \, \storeMap
\]
On the other hand:
%\[\begin{array}{lll}
%\Sem{\get{\ell}{\lambda x.(M \Bind W)}} e \, \storeMap 
%& = & (\Sem{\lambda x.(M \Bind W)} e) \, \storeMap(\ell) \, \storeMap \\
%& = & \Sem{M \Bind W} e' \, \storeMap \\
%& = & \GSext{(\Sem{W} e')}(\Sem{M} e') \, \storeMap
%\end{array}\]
\[
\Sem{\get{\ell}{\lambda x.(M \Bind W)}} e \, \storeMap 
= (\Sem{\lambda x.(M \Bind W)} e) \, \storeMap(\ell) \, \storeMap
= \Sem{M \Bind W} e' \, \storeMap 
= \GSext{(\Sem{W} e')}(\Sem{M} e') \, \storeMap
\]
But $x \not\in FV(W)$ implies $\Sem{W} e' = \Sem{W} e$, and we are done.
\end{proof}

Equations (\ref{prop:true-eq-i})-(\ref{prop:true-eq-iii}) in Prop. \ref{prop:true-eq} are the {\em monadic laws} from Def. \ref{def:Cmonad}, equations (\ref{prop:true-eq-iv})-(\ref{prop:true-eq-v})
are instances of \Cref{eq:gen-op}.

% eof
%!TEX root = DenSemIm-Main.tex

\section{The filter-model construction}
\label{sec:intersection}

%Following \cite{Coppo-et.al'84} and \cite{Abramsky'91}, domains in the category of $\omega$-algebraic lattices can be described via an intersection type theory: see
%\cite{BarendregtDS2013} chap. 16. 
In this section we switch to the category $\ALG$ of $\omega$-algebraic lattices. $\ALG$ is the full subcategory of $\CatDom$ of complete lattices 
$(D, \sqsubseteq_D)$ such that the set $K(D)$ of compact points is countable and $d = \bigsqcup \Set{e \in K(D) \mid e \sqsubseteq_D d}$ is a directed sup for all $d \in D$.

Observe that morphisms in $\ALG$ are Scott-continuous maps, hence preserving direct sups, but not necessary arbitrary ones.

\begin{definition}\label{def:theories}
An {\em intersection type theory}, shortly {\em itt}, is a pair $\Th_A = (\Lang_A, \leq_A)$ where $\Lang_A$, the {\em language} of 
$\Th_A$, is a countable set of type expressions closed under $\Inter$, and $\omega_A\in \Lang_A$ is a special constant; $\leq_A$ 
is a pre-order over $\Lang_A$ closed under the following rules:
%$\Sigma_A \subseteq \Lang_A \times \Lang_A$
%is the set of {\em axioms} of $\Th_A$; $\leq_A$ is the least pre-order over $\Lang_A$ closed under the following axioms and rules:
\[\begin{array}{c@{\hspace{1.5cm}}c@{\hspace{1.5cm}}c@{\hspace{1.5cm}}c@{\hspace{1.5cm}}c}
\alpha \leq_A \omega_A &
\alpha \Inter \beta \leq_A \alpha &
\alpha \Inter \beta \leq_A \beta &
\prooftree
	\alpha \leq_A \alpha' \quad \beta \leq_A \beta'
\justifies
	\alpha \Inter \beta \leq_A \alpha' \Inter \beta'
\endprooftree
%&
%\prooftree
%	(\alpha, \beta) \in \Sigma_A
%\justifies
%	\alpha \leq_A \beta
%\endprooftree
\end{array}
\]
\end{definition}
In the literature, the operator $\Inter$ is called {\em intersection}, and $\omega_A$ the {\em universal type}.
Informally, $\Th_A$ is identified with the set of inequalities $\alpha \leq_A \beta$, so that, abusing terminology, we shall also say that $\leq_A$ is a type theory. 
Clearly, the quotient $\Lang_A/_{\leq_A}$ is an inf-semilattice, with $\Inter$ as meet and $\omega_A$ as top
element.

\medskip
\begin{definition}\label{def:filters}
A non empty $F \subseteq \Lang_A$ is a {\em filter} of $\Th_A$ if it is closed  under $\Inter$ and upward closed w.r.t. $\leq_A$; let $\Filt_A$ be the set of filters of $\Th_A$.
The {\em principal filter over} $\alpha$ is the filter $\uparrow_A\alpha = \Set{\beta \in \Lang_A \mid \alpha \leq_A \beta}$. 
\end{definition}

The following proposition is the fundamental fact about intersection types and $\omega$-algebraic lattices:

\begin{proposition}[Representation theorem]\label{prop:filterDomain}
The partial order $(\Filt_A, \subseteq)$ of the filters of an intersection type theory $\Th_A$ is an $\omega$-algebraic lattice, whose compact elements are the
principal filters. Vice versa, any $\omega$-algebraic lattice $A$ is isomorphic the poset $(\Filt_A, \subseteq)$ of filters of some intersection type theory $\Th_A$.
\end{proposition}

\begin{proof} (Sketch).
The first part is established via a direct argument. To the second part, let $\Lang_A = \Set{\alpha_d \mid d \in K(A)}$ (where each $ \alpha_d $ is a new type constant) which is countable by hypothesis, and take
$\Th_A = \Set{\alpha_d \leq_A \alpha_e \mid e \sqsubseteq d}$. By observing that $\alpha_d \Inter_A \alpha_e = \alpha_{d \sqcup e}$, where  
$d \sqcup e \in K(A)$ if $d,e \in K(A)$, and that $\omega_A = \alpha_\top$, we have that $\Th_A$ is an intersection type theory and that
$\Lang_A/_{\leq_A}$ is isomorphic to $K^{\textit{op}} (A)$  ordered by the opposite to the ordering of $A$. From this it follows that $\Filt_A$ is isomorphic
to the ideal completion of $K(A)$, which in turn is isomorphic to $A$ by algebraicity.
\end{proof}

The above proposition substantiates the claim that intersection type theories are the {\em logic} of the domains in $\ALG$.
The theory $\Th_A$ is called the Lindenbaum algebra of $A$ in \cite{Abramsky'91}; Prop. \ref{prop:filterDomain} is the object part of the construction 
establishing that intersection type theories, together with a suitable notion of morphisms among them, do form a category that is equivalent to $\ALG$; such equivalence is
an instance of Stone duality as studied in  \cite{Abramsky'91} w.r.t. the category of 2/3 SFP domains, 
when $\ALG$ is viewed as a subcategory of topological spaces: see e.g. \cite{alesdezahons04}.

The second part in the proof of Prop. \ref{prop:filterDomain} is the most relevant to us: it provides a recipe to describe a domain via a formal system, deriving
inequalities among type expressions that encode ``finite approximations'' of points in a domain, hence of the denotations of terms if the domain is a $\lamImp$-model.
Therefore, we seek theories $\Th_D$ and $\Th_S$ such that:
\begin{equation}\label{eq:filter-equation}
 \Filt_D \iso [\Filt_D \to  [\Filt_S \to (\Filt_D \times \Filt_S)_\bot]]
 \end{equation}

The first step is to show how the functors involved in the equation above, namely the lifting, the product and the (continuous) function space, of which $S = [\Label \to D]$
 is a special case, can be put in correspondence with the construction of new theories out of the theories which determine the domains combined by the functors.
 As this is known after \cite{Abramsky'91}, we just recall their definitions, and fix the notation.
 
 \begin{definition}\label{def:functor-theories}
Suppose that the theories $\Th_A$ and $\Th_B$ are given, then for $\alpha\in \Lang_A$ and $\beta\in \Lang_B$ define:
\[\begin{array}{llll@{\hspace{1cm}}llll}
	\Lang_{A_\bot}~  & \psi & ::= & \alpha \mid \psi \Inter \psi' \mid \omega_{A_\bot} 
	&
	\Lang_{A\times B}~  & \pi & ::= & \alpha \times \beta \mid \pi \Inter \pi' \mid \omega_{A\times B} 
	\\
	\Lang_{A\to B}~  & \phi & ::= & \alpha \to \beta \mid \phi \Inter \phi' \mid \omega_{A\to B} 
	&
	\Lang_{ st A}~  & \sigma & ::= & \tuple{\ell:\alpha} \mid \sigma \Inter \sigma' \mid \omega_{S}
\end{array}\]
%Abbreviating with $\alpha =_A \beta$ the pair of inequalities $\alpha \leq_A \beta$ and $\beta \leq_A \alpha$, we set:
Then, we define the following sets of axioms:
\begin{enumerate}
\item $\alpha\leq_{A}\alpha' \Then \alpha \leq_{A_\bot}\alpha'$. % but $\omega_A  <_{A_\bot} \omega_{A_\bot}$;

\item $\omega_{A \times B} \leq_{A\times B} \omega_A \times \omega_B$ and all instances of
	$(\alpha \times \beta) \Inter (\alpha' \times \beta') \leq_{A\times B} (\alpha \Inter \alpha' )\times (\beta \Inter \beta')$.

\item $\omega_{A \to B} \leq_{A\to B} \omega_A \to \omega_B$ and all instances of
	$(\alpha \to \beta) \Inter (\alpha \to \beta') \leq_{A\to B} \alpha \to (\beta \Inter \beta')$.
	
\item $\omega_S \leq_{st A} \tuple{\ell : \omega_A}$ and all instances of 
	$\tuple{\ell : \alpha}\Inter \tuple{\ell : \alpha'}\leq_{st A} \tuple{\ell : \alpha\Inter \alpha'}$.

\end{enumerate}
Finally, the theories $\Th_{A \times B}$, $\Th_{A\to B}$ and $\Th_{st A}$ are respectively closed under the rules:
\[\begin{array}{c@{\hspace{1cm}}c@{\hspace{1cm}}c}
	\prooftree
		\alpha \leq_{A} \alpha' \quad \beta \leq_B \beta'
	\justifies
		\alpha \times \beta \leq_{A\times B} \alpha' \times \beta'
	\endprooftree
&
	\prooftree
		\alpha' \leq_{A} \alpha \quad \beta \leq_B \beta'
	\justifies
		\alpha \to \beta \leq_{A\to B} \alpha' \to \beta'
	\endprooftree
&
	\prooftree
		\alpha\leq_{A}\alpha' 
	\justifies
		\tuple{\ell: \alpha} \leq_{st A} \tuple{\ell: \alpha'}
	\endprooftree

\end{array}
\]
\end{definition}

If $X \in \Filt_{A\to B}$ and $Y \in \Filt_A$, where either $A = D$ and $B = \GS\, D$ or $A = S$ and $B = (D\times S)_\bot$, we define:
\[ X \cdot Y = \Set{\psi \mid \exists \varphi \in Y.\; \varphi \to \psi \in X}\]
We write $\varphi =_A \psi$ if $\varphi \leq_a \psi \leq_A \varphi$. 
For a map $f: \Filt_A \to \Filt_B$, define \[\Lambda(f) = \; \uparrow_{A\to B} \Set{\varphi \to \psi \in \Lang_{A \to B}\mid \psi \in f(\uparrow_A \varphi)}\]

 \begin{lemma}\label{lem:filter-application}
 If $X \in \Filt_{A\to B}$ and $Y \in \Filt_A$ then $X \cdot Y \in \Filt_B$. Moreover the map $\_ \,\cdot\, \_$ is continuous in both its arguments.
 \end{lemma} 
 
 \begin{proof}
 We have $X \ni \omega_{A \to B} \leq \omega_A \to \omega_B$ and that $\omega_A \in Y$, hence $\omega_B \in X \cdot Y$. Since $\to$ is monotonic in its second argument,
 $X \cdot Y$ is upward closed. Finally, if $\psi_1, \psi_2 \in X \cdot Y$ then $\varphi_i \in Y$ and $\varphi_i \to \psi_i \in X$ for $i = 1,2$ and some $\varphi_1,\varphi_2$; then,
 by antimonotonicity of $\to$ w.r.t. its first argument,
 $\varphi_i \to \psi_i \leq \varphi_1 \Inter \varphi_2 \to \psi_i \in X$ being $X$ upward closed, and
 $\varphi_1 \Inter \varphi_2 \to \psi_1 \Inter \psi_2 = (\varphi_1 \Inter \varphi_2 \to \psi_1) \Inter (\varphi_1 \Inter \varphi_2 \to \psi_2) \in X$ since $X$ is closed under intersections.
 
Concerning continuity, we have to show that $X\cdot Y = \bigsqcup {\cal Z}$ where 
${\cal Z} = \Set{\uparrow \varphi \;\cdot \uparrow \psi \mid \varphi \in X \And \psi \in Y}$, and
the sup of a family of filters is the least filter including the union of the family. Inclusion from left to right follows by observing that if $\chi \in X \cdot Y$ then
$\psi \to \chi \in X$ for some $\psi \in Y$, and of course $\chi \in \;\uparrow (\psi \to \chi) \;\cdot \uparrow \psi$. Viceversa if $\chi \in \bigsqcup  {\cal Z}$
then for finitely many $\psi_i,\chi_i$ we have that $\bigcap_i \chi_i \leq \chi$, $\psi_i \in Y$ and $\psi_i \to \chi_i \in X$. It follows that $\chi_i \in X \cdot Y$ for all $i$,
and hence the thesis since $X\cdot Y$ is a filter by the above.
 \end{proof}

 \begin{lemma}\label{abstraction-lemma}
 If $f: \Filt_A \to \Filt_B$ is continuous then $\Lambda(f) \in \Filt_{A \to B}$. $\Lambda(\cdot)$ is itself continuous and such that:
 \[\Lambda(f) \cdot X = f(X)  \qquad \Lambda(\metalambda Y.(X \cdot Y)) = X
 \]
 \end{lemma}
 \begin{proof}
 Easy by unfolding definitions and by Lemma \ref{lem:filter-application}.
 \end{proof}

%In the rest of the paper we assume that $\Inter$ and $\times$ take precedence over $\;\to$ and that $\;\to$ associates to the right, so that
%$\delta \to \tau \Inter \tau'$ reads as $\delta \to (\tau \Inter \tau')$ and
%$ \delta' \to \sigma' \to \delta'' \times \sigma''$ reads as $\delta' \to (\sigma' \to (\delta'' \times \sigma''))$.
Now we can establish:

\begin{proposition}\label{prop:functor-filters}
The following are isomorphisms in $\ALG$: 
\[\Filt_{A_\bot} \iso (\Filt_A)_\bot, \quad \Filt_{A \times B} \iso \Filt_A \times \Filt_B, \quad 
\Filt_{A \to B} \iso [\Filt_A \to \Filt_B] \quad \mbox{and} \quad \Filt_{st A} \iso \Label \to \Filt_A.
\] 
\end{proposition}

\begin{proof}
That $\Filt_{A_\bot} \iso (\Filt_A)_\bot$ is a consequence of the fact that $\omega_A <_{A_\bot} \omega_{A_\bot}$ is strict, hence
$\uparrow \omega_{A_\bot}$ is the new bottom added to $\Filt_A$. 
$\Filt_{A \times B} \iso \Filt_A \times \Filt_B$ is induced by the continuous extension of the map $\uparrow (\alpha \times \beta) \mapsto (\uparrow\alpha, \uparrow\beta)$,
that is clearly invertible. Finally, $\Filt_{A \to B} \iso [\Filt_A \to \Filt_B]$ is Lemma \ref{abstraction-lemma}, and $\Filt_{st A} \iso [\Label \to \Filt_A]$ can be reduced to 
a particular case of the latter by using the continuous extension of the map $\uparrow \tuple{\ell: \delta} \mapsto (\ell \mapsto \uparrow \delta)$.
\end{proof}

The next step is to apply Prop. \ref{prop:functor-filters} to describe the compact elements of $D \iso \Filt_D$ and of $S \iso  \Filt_S$ and the (inverse of) their orderings. 
Alas, this cannot be done directly because of the recursive
nature of the equation (\ref{eq:filter-equation}), but it can be obtained by mirroring the inverse limit construction, e.g. along the lines of 
\cite{alesdezahons04,Alessi-Severi'08}. 
Although possible in principle, such a construction requires lots of machinery from the theory of the solution of domain equations; instead we follow 
the shorter path to define the involved type theories below simply by mutual induction: 

\begin{definition}\label{def:typetheories}
Let us abbreviate $S = D^\Label$ and $C = (D\times S)_\bot$; then define the following itt languages by mutual induction:
\[\begin{array}{llll@{\hspace{1cm}}llll}
\Lang_D : & \delta & ::= & \delta \to \tau \mid \delta \Inter \delta' \mid \omegaD & 
\Lang_{\Store} : & \sigma & ::= & \tuple{\ell : \delta} \mid \sigma \Inter \sigma' \mid \omegaS \\
\Lang_{C} : & \kappa & ::= & \delta \times \sigma \mid \kappa \Inter \kappa' \mid \omegaC &  
\Lang_{\GS D} : & \tau & ::= & \sigma \to \kappa \mid \tau \Inter \tau' \mid \omegaSD 
\end{array}\]
We assume that $\Inter$ and $\times$ take precedence over $\to$ and that $\to$ associates to the right so that
$\delta \to \tau \Inter \tau'$ reads as $\delta \to (\tau \Inter \tau')$ and
$ \delta' \to \sigma' \to \delta'' \times \sigma''$ reads as $\delta' \to (\sigma' \to (\delta'' \times \sigma''))$.

Then the relations $\leqD, \leqS, \leqC$ and $\leqSD$ are the least itt's  over the respective languages satisfying the following axioms:
\[\begin{array}{c@{\hspace{1cm}}c}
(\delta \to \tau) \Inter (\delta \to \tau') \leqD \delta \to \tau \Inter \tau'
&
\omegaD \leqD \omegaD \to \omegaSD
\\ [2mm] 
\tuple{\ell : \delta} \Inter \tuple{\ell : \delta'} \leqS \tuple{\ell : \delta \Inter \delta'}
&
\omegaS \leqS \tuple{\ell : \omegaD}
\\ [2mm]
\multicolumn{2}{c}{
(\delta \times \sigma) \Inter (\delta' \times \sigma') \leqC (\delta \Inter \delta') \times (\sigma \Inter \sigma')}
\\ [2mm]
(\sigma \to \kappa) \Inter (\sigma \to \kappa') \leqSD \sigma \to \kappa \Inter \kappa'
&
\omegaSD \leqSD \omegaS \to \omegaC
\end{array}\]
and closed under the rules:
\[\begin{array}{c@{\hspace{8mm}}c@{\hspace{8mm}}c@{\hspace{8mm}}c}
\prooftree
	\delta' \leqD \delta 
	\quad
	\tau \leqSD \tau'
\justifies
	\delta \to \tau \leqD \delta' \to \tau'
\endprooftree
&
\prooftree
	\delta \leqD \delta'
\justifies
	\tuple{\ell : \delta} \leqS \tuple{\ell : \delta'}
\endprooftree
&
\prooftree
	\delta \leqD \delta'
	\quad
	\sigma \leqS \sigma'
\justifies
	\delta \times \sigma \leqC \delta' \times \sigma'
\endprooftree
&
\prooftree
	\sigma' \leqS \sigma 
	\quad
	\kappa \leqC \kappa'
\justifies
	\sigma \to \kappa \leqSD \sigma' \to \kappa'
\endprooftree
\end{array}\]
\end{definition}

\begin{remark}\label{rm:leq-arrow}
	%If we write $\models \varphi \leq \varphi'$ to mean that $\Sem{\varphi}_\xi \subseteq \Sem{\varphi'}_\xi$ for all $\xi \in \TypeEnv$ then, 
	%by \cref{prop:leq-sem}, $\not\models \varphi \leq \varphi'$, implies that  $\varphi \leq \varphi'$ is not derivable form the axiomatization of $\leq$, which we
	%shortly denote by $\varphi \not\leq \varphi'$. Now 
	Observe that the only constants used in Def. \ref{def:typetheories} are the $\omega$'s; also we have plenty of equivalences $\varphi = \psi$, namely 
	relations $\varphi \leq \psi \leq \varphi$, involving these constants, that are induced by the definition of the itt's above. For example $\delta \to \omegaSD = \omegaD$ is derivable since
	$\omegaD \leqD \delta \to \omegaSD \leqD \omegaD$ are axioms of $\Th_D$; similarly $\tuple{\ell : \omegaD} = \omegaS$ and $\omegaS \to \omegaC = \omegaSD$. 
	
	However, none of the theories above is trivial, because $\omegaC \not\leqC \omegaD \times \omegaS$. This is implied by
	$\uparrow \omega_C \subset \; \uparrow(\omega_D\times\omega_S)$, that is $\uparrow \omega_C$ corresponds to the new bottom element added to
	$\Filt_{D \times S} \iso \Filt_D \times \Filt_S$ in $\Filt_C = \Filt_{(D\times S)_\bot} \iso (\Filt_{D \times S})_\bot$.
	%Unlike, we have that $\omegaC \not\leqC \omegaD \times \omegaS$, and this lead to non trivial initial solution of the equation (\ref{eq:main-domain-eq}).
	
	The choice of not having atomic types $\alpha$ in $\Lang_D$ is minimalistic, and parallels the analogous definition 5.2.1 of \cite{Abramski-Ong'93}, where the only 
	constant in the domain logic of a lazy $\lambda$-model $D \iso [D \to D]_\bot$ is $t$ (true), corresponding to our $\omega_D$, and having
	$t \not \leq (t \to t)_\bot$ in the theory.
	
	As a more careful description of $\Filt_D$ would show, by relating its construction to the solution of eq. (\ref{eq:domain-equation}) in Theorem \ref{thm:domain-equation},
	the reason why the $\omega$'s suffice is that the domain equation has a non trivial initial solution. Adding atomic types $\alpha$ to $\Lang_D$ also leads to
	a filter model $\Filt_{D'}$ of $\lamImp$, which is however not isomorphic to $[\Filt_{D'} \to \GS (\Filt_{D'})]$, which is what we show to hold for $\Filt_D$ in the next theorem. 
	To restore the desired isomorphism
	it suffices to add axioms $\alpha =_{D'} \omega_{D'} \to \omega_S \to (\alpha \times \omega_S)$ for all atomic $\alpha$: these correspond to the axioms $\alpha = \omega \to \alpha$
	in \cite{BCD'83}, which are responsible of obtaining a ``natural equated'' solution to the equation $D =[ D \to D]$ of Scott's model: see \cite{Alessi-Severi'08}.
\end{remark}

%We avoid subscripts in the $\omega$ and the $\leq$ whenever clear form the context.
%It is easily seen that all axioms of the shape $\varphi \leq \varphi'$ imply their reverse $\varphi' \leq \varphi$; when this is the case we write
%$\varphi = \varphi'$.
%
%We use the abbreviations $\bigwedge_{i\in I}\varphi_i \equiv \varphi_1 \Inter \cdots \Inter \varphi_n$ if $I = \Set{1,\ldots,n}$, 
%and $\bigwedge_{i\in \emptyset}\varphi_i \equiv \omega$.

\begin{theorem}\label{thr:iso-D-S2GSD}
The theories $\Th_D$ and $\Th_S$ induce the filter domains $\Filt_D$ and $\Filt_S$ which satisfy equation (\ref{eq:filter-equation}).
\end{theorem}

\begin{proof}
By Prop. \ref{prop:functor-filters}, $\Filt_{D \to \GS D} \iso [\Filt_D \to \Filt_{\GS D}] \iso [\Filt_D \to [\Filt_S \to (\Filt_D \times \Filt_S)_\bot]]$; then 
the thesis follows since $\Filt_D = \Filt_{D \to \GS D}$ by construction.
\end{proof}

\newcommand{\DefEq}{=_{\textit{def}}}

\myparagraph{The $\lamImp$ filter model} 
According to Def. \ref{def:lam-imp-mod}, to show that $\Filt_D$ is a $\lamImp$-model it remains to see that $\Filt_{\GS D} \iso \GS \Filt_D$ can be endowed
with the structure of a monad, which amounts to say that the maps $\UnitF: \Filt_D \to \Filt_{\GS D}$ and 
$\Bind^\Filt: \Filt_{\GS D} \times [\Filt_D \to \Filt_{\GS D}] \to \Filt_{\GS D}$
are definable in such a way to satisfy the monadic laws. This easily follows by instantiating the definition of the state monad Def. \ref{def:statemon} to the subcategory of filter domains:
\begin{equation}\label{eq:UnitF}
\UnitF X \DefEq \Lambda(\metalambda Y \in \Filt_S. (X, Y)) = \; \FILTS{\sigma \to \delta \times \sigma \in \Lang_{\GS D} \mid \delta \in X}
\end{equation}
where ${\tt Filt}\, U$ is the least filter including the set $U$.
On the other hand, observing that $\Filt_{\GS D} \times [\Filt_D \to \Filt_{\GS D}] \to \Filt_{\GS D} \iso \Filt_{\GS D} \times \Filt_D \to \Filt_{\GS D}$, 
for all $X \in \Filt_{\GS D} $, $Y \in \Filt_D$ and $Z \in \Filt_S$ we expect that:
\[(X \Bind^\Filt Y) \cdot Z = \Let{(U, V)}{X\cdot Z}{(Y \cdot U)\cdot V} = \left\{
	\begin{array}{ll}
	(Y \cdot U)\cdot V & \mbox{if $X\cdot Z = U \times V \neq \;\uparrow_C \omega_C$} \\[1mm]
	 \uparrow_C \omega_C & \mbox{otherwise}
	\end{array}\right.
\]
Hence we put
\begin{equation}\label{eq:BindF}
X \Bind^\Filt Y \DefEq  
	\FILTS{\sigma \to \delta'' \times \sigma'' \in \Lang_{\GS D} \mid  
		\exists \delta', \sigma'.\; \sigma \to \delta' \times \sigma' \in X \And
			\delta' \to \sigma' \to \delta'' \times \sigma'' \in Y
	} \; \cup \; \uparrow_C \omega_C
\end{equation}

The final step is to define the interpretations of 
$\textit{get}_\ell : (\Filt_D \to \Filt_{\GS D}) \to \Filt_{\GS D} \iso \Filt_D \to \Filt_{\GS D}$ and 
$\textit{set}_\ell: \Filt_D \times \Filt_{\GS D} \to \Filt_{\GS D}$, which also are derivable from their interpretation into a generic
model $\cal D$. Indeed, according to Def. \ref{def:lambda-imp-model}, for any $X \in \Filt_D$ and $Y \in \Filt_S$ we must have:
\[ \getF{\ell}{X} \cdot Y = X \cdot (Y \cdot \Set{\ell}) \cdot Y\]
where we assume that $\_\,\cdot\,\_$ associates to the left, and we abbreviate $Y \cdot \Set{\ell} = \Set{\delta \in \Lang_D \mid \tuple{\ell:\delta} \in Y}$ representing the
application of the ``store'' $Y$ to the location $\ell$.
Now, let $Z \DefEq  \Set{\tau \in \Lang_{\GS D} \mid \exists \delta \in Y\cdot \Set{\ell}.\; \delta\to\tau \in X}$ then
\[ X \cdot (Y \cdot \Set{\ell}) \cdot Y 
= Z \cdot Y\]
If $\tau = \omega_{\GS D}$ then $\delta\to\tau = \omega_D$, which trivially belongs to any filter; if instead $\tau \neq \omega_{\GS D}$ then 
$\tau = \bigwedge_I \sigma_i \to\kappa_i$ and $\delta\to\tau \in X$ if and only if $\delta\to \sigma_i \to\kappa_i \in X$ for all $i\in I$. From this we conclude that
\[ Z \cdot Y = \Set{\kappa \in \Lang_C \mid \exists \sigma \in Y \mid \sigma \to \kappa \in Z} = 
	\Set{\kappa \in \Lang_C \mid \exists \tuple{\ell:\delta} \Inter \sigma \in Y.\; \delta \to \sigma \to \kappa \in X}\]
and therefore the appropriate definition of $\getF{\ell}{X}$ is
\begin{equation}\label{eq:getF}
\getF{\ell}{X} \DefEq \;\FILTS{(\tuple{\ell:\delta} \Inter \sigma) \to \kappa \in \Lang_{\GS D} \mid \delta \to (\sigma \to \kappa)\in X} 
\end{equation}
%where $\dom{\sigma} = \Set{\ell \in \Label \mid \exists \delta \neq \omega_D.\; \sigma \leq_S \tuple{\ell:\delta}}$.
Similarly, again by Def. \ref{def:lambda-imp-model}, for any $X \in \Filt_D$,  $Y \in \Filt_{\GS D}$ and $Z \in \Filt_S$ we expect:
\[  \setF{\ell}{X}{Y} \cdot Z = Y \cdot (Z[\ell \mapsto X])\]
where $Z[\ell \mapsto X]$ is supposed to represent the update of $Z$ by associating $X$ to $\ell$, namely:
\[ Z[\ell \mapsto X] = \FILT{\Set{\tuple{\ell : \delta} \mid \delta \in X }\cup \Set{\tuple{\ell' : \delta' }\mid \tuple{\ell' : \delta' }\in Z \;\And\; \ell' \neq \ell}} \]
Then
\[ \begin{array}{rcl}
Y \cdot (Z[\ell \mapsto X]) & = & \FILTS{\kappa \mid \exists \sigma \to \kappa \in Z[\ell \mapsto X].\; \sigma \to \kappa \in Y} \\ [1mm]
& = & \FILTS{\kappa \mid \exists \sigma'  \in Z, \delta \in X.\; \tuple{\ell: \delta} \Inter \sigma' \to \kappa \in Y\;\And\; \ell \not \in \dom{\sigma'}}
\end{array}\]
and therefore we define:
\begin{equation}\label{eq:setF}
\setF{\ell}{X}{Y} \DefEq \FILTS{\sigma' \to \kappa \in \Lang_{\GS D} \mid \exists \delta \in X.\; \tuple{\ell: \delta} \Inter \sigma' \to \kappa \in Y\;\And\; \ell \not \in \dom{\sigma'}}
\end{equation}
Eventually, by construction we have:

\begin{theorem}\label{thr:filter-model}
The structure $\Filt = (\Filt_D, \GS, \Sem{\cdot}^{\Filt_D},  \Sem{\cdot}^{\Filt_{\GS D}}  )$ is a $\lamImp$-model where, for $e: \Var \to \Filt_D$ the interpretations
$\Sem{V}^{\Filt_D} e \in \Filt_D$ and $\Sem{M}^{\Filt_{\GS D}} e \in \Filt_{\GS D}$ are defined by:
\[\begin{array}{rcl@{\hspace{2cm}}rcl}
\Sem{x}^{\Filt_D} e & = & e(x) & \Sem{\lambda x.M}^{\Filt_D} e & = & \Lambda(\metalambda X \in \Filt_D.\; \Sem{M}^{\Filt_{\GS D}} e[x\mapsto X]) \\[1mm]
\Sem{[V]}^{\Filt_{\GS D}} e & = & \UnitF (\Sem{V}^{\Filt_D} e) & \Sem{M \Bind V}^{\Filt_{\GS D}} e & = & (\Sem{M}^{\Filt_{\GS D}} e) \Bind^\Filt (\Sem{V}^{\Filt_D} e) \\ [1mm]
\Sem{\get{\ell}{\lambda x.M}}^{\Filt_{\GS D}} e & = & \getF{\ell}{\Sem{\lambda x.M}^{\Filt_D} e} & 
	\Sem{\set{\ell}{V}{M}}^{\Filt_{\GS D}} e & = & \setF{\ell}{\Sem{V}^{\Filt_D} }{\Sem{M}^{\Filt_{\GS D}} e }
\end{array}
\]
\end{theorem}

\begin{proof} By applying equations (\ref{eq:UnitF})-(\ref{eq:setF}) to the clauses of Def. \ref{def:lambda-imp-model}.
\end{proof}

\section{Deriving the type assignment system}\label{sec:deriving}

The definition of a type assignment system can be obtained out of the construction of the filter-model.
%The construction of the filter-model is now to the base of the definition of a type assignment system.
 The system is nothing else than the description of the denotation of terms in the particular $\lamImp$-model $\Filt$, which is possible because both $\Sem{V}^{\Filt_D} e \in \Filt_D$ and $\Sem{M}^{\Filt_{\GS D}} e \in \Filt_{\GS D}$, and filters are sets of types. 

Types are naturally interpreted as subsets of the domains of term interpretation, namely $\Filt_D$ or $\Filt_{\GS D}$ according to their kind; by applying to the present case the same construction as in \cite{BCD'83} and \cite{Abramsky'91}, which originates from Stone duality for boolean algebras, we set:

\begin{definition}\label{def:type-interpretation}
For $A = D, S, C$ and $\GS D$ and $\varphi \in \Lang_A$ define:
\[ \Sem{\varphi}^{\Filt} = \Set{X \in \Filt_A \mid \varphi \in X} \]
\end{definition} 

Such interpretation is a generalization of  the natural interpretation of intersection types over an extended type structure \cite{Coppo-et.al'84}, which in the present case yields:

\begin{proposition}\label{prop:type-interp}
For $A = D, S, C$ and $\GS D$ and $\varphi, \psi \in \Lang_A$ we have:
\[ \Sem{\varphi \Inter \psi}^{\Filt} = \Sem{\varphi}^{\Filt} \cap \Sem{\psi}^{\Filt}, 
\qquad \Sem{\omega_A}^{\Filt} = \Filt_A \quad \mbox{and} \quad
\varphi \leq \psi \Then \Sem{\varphi}^{\Filt} \subseteq \Sem{\psi}^{\Filt} \]
Moreover:
\begin{enumerate}
\item \label{prop:type-interp-i}
	$\Sem{\delta \to \tau}^{\Filt} = \Set{X \in \Filt_D \mid \forall Y \in \Sem{\delta}^{\Filt}.\; X \cdot Y \in \Sem{\tau}^{\Filt} }$
\item $\Sem{\tuple{\ell:\delta}}^{\Filt} = \Set{X \in \Filt_S \mid X\cdot \Set{\ell} \in \Sem{\delta}^{\Filt}}$
\item $\Sem{\delta \times \sigma}^{\Filt} = \Set{X \in \Filt_C \mid \pi_1(X) \in \Sem{\delta}^{\Filt} \And \pi_2(X) \in \Sem{\sigma}^{\Filt} } $
\item $\Sem{\sigma \to \kappa}^{\Filt} = \Set{X \in \Filt_{\GS D} \mid \forall Y \in \Sem{\sigma}^{\Filt}. \; X \cdot Y \in \Sem{\kappa}^{\Filt}  }$
\end{enumerate}
where, for $X \in \Filt_C \iso (\Filt_D \times \Filt_S)_\bot$,  $\pi_1(X) = \Set{\delta \in \Lang_D\mid \exists \sigma \in \Lang_S.\; \delta \times \sigma \in X}$ and similarly for $\pi_2(X)$.
\end{proposition}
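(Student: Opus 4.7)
The plan splits the proposition into two layers: the filter-theoretic closure properties stated in the header, and the four structural clauses linking each type connective to a corresponding operation on filters.

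For the header, all three properties follow directly from the fact that by construction every $X \in \Filt_A$ is a nonempty, upward closed subset of $\Lang_A$ that is closed under finite meets $\Inter$. Hence $\varphi \Inter \psi \in X$ if and only if $\varphi \in X$ and $\psi \in X$: right-to-left is closure under meets, while left-to-right uses $\varphi \Inter \psi \leq \varphi, \psi$ together with upward closure. This gives $\Sem{\varphi \Inter \psi}^\Filt = \Sem{\varphi}^\Filt \cap \Sem{\psi}^\Filt$. The top element $\omega_A$ of $\Lang_A$ belongs to every filter by upward closure and nonemptiness, so $\Sem{\omega_A}^\Filt = \Filt_A$, and the monotonicity clause $\varphi \leq \psi \Then \Sem{\varphi}^\Filt \subseteq \Sem{\psi}^\Filt$ is once more upward closure.

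For the four structural clauses, the proof is uniform: I would unfold the definition of the relevant filter operation (application in (1) and (4), lookup in (2), projections in (3)) introduced in Section~\ref{sec:intersection}, and for the backward directions instantiate the universal quantifier on $Y$ by a principal filter $\up\!\delta$ (or $\up\!\sigma$). For (1), filter application is characterized by $\tau \in X \cdot Y$ iff there is some $\delta' \in Y$ with $\delta' \to \tau \in X$; the forward direction is then immediate from $\delta \to \tau \in X$ and $\delta \in Y$. Conversely, taking $Y = \up\!\delta$, the hypothesis yields $\tau \in X \cdot \up\!\delta$, hence $\delta' \to \tau \in X$ for some $\delta \leq \delta'$, and the subtyping rules for arrows recover $\delta \to \tau \in X$. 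Clause (4) is entirely analogous in $\Filt_{\GS D}$. Clause (2) is immediate from the definition of $X \cdot \Set{\ell}$ as $\Set{\delta \mid \tuple{\ell{:}\delta} \in X}$. Clause (3) follows by unfolding the isomorphism $\Filt_C \iso (\Filt_D \times \Filt_S)_\bot$, since on its non-bottom part a filter on $\Lang_C$ is completely determined by, and conversely determines, the pair of projections $(\pi_1(X),\pi_2(X))$.

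The main obstacle I expect is the backward direction of clauses (1) and (4): extracting an arrow type in $X$ from a hypothesis quantified over all filters $Y$ requires (i) the existence in $\Filt_D$ of principal filters $\up\!\delta$ for every $\delta \in \Lang_D$, and (ii) the subtyping rules for the arrow constructor satisfying the usual contravariant-covariant monotonicity, so that $\delta \leq \delta'$ and $\delta' \to \tau \in X$ give $\delta \to \tau \in X$. Both ingredients are properties of the inductively defined type theories $\Lang_D$ and $\Lang_{\GS D}$ established in Section~\ref{sec:intersection}, so they can be invoked here; once they are in hand, the rest of the proof reduces to routine unfolding of definitions.
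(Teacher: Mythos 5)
Your proposal is correct and follows essentially the same route as the paper: the header facts from the defining closure properties of filters, and for the structural clauses the key step of instantiating the quantified filter at the principal filter $\up\!\delta$, then recovering $\delta\to\tau\in X$ via the contravariant arrow subtyping and upward closure. If anything, your treatment of the forward direction of clause (1) (arguing for arbitrary $Y\in\Sem{\delta}^{\Filt}$ rather than only the principal one) is slightly more careful than the paper's.
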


\begin{proof} The first part is immediate from the definition of type interpretation and of filters. Concerning the remaining statements,
we prove just (\ref{prop:type-interp-i}) as the other ones are similar.
First observe that $\uparrow\varphi \in \Sem{\varphi}^{\Filt}$. Hence, 
if $X \in \Sem{\delta \to \tau}^{\Filt}$ then $\delta \to \tau \in X$, which implies that $X \cdot \uparrow \delta = \;\uparrow \tau \in \Sem{\tau}^{\Filt}$. Viceversa if
$X \cdot Y \in \Sem{\tau}^{\Filt} $ for all $Y \in \Sem{\delta}^{\Filt}$ then in particular $X \cdot \uparrow \delta = \;\uparrow \tau \in \Sem{\tau}^{\Filt}$, which implies that
for some $\delta' \in \; \uparrow \delta$, $\delta'\to \tau \in X$; but then $\delta \leq_D \delta'$ so that $\delta'\to \tau \leq_D \delta\to \tau$ and therefore $\delta \to \tau \in X$
as $X$ is upward closed.
\end{proof}

We can now build the type assignment system as the description, via type derivations, of the semantics of terms and types in the model $\Filt$. First, type contexts 
$\Gamma = \Set{x_1:\delta_1, \ldots, x_n:\delta_n}$ are put into correspondence with environments $e_\Gamma: \Var \to \Filt_D$ by setting
$e_\Gamma(x_i) =\;\uparrow_D \delta_i$. Second, typing judgments translate the statements:
\[ \Gamma \der V:\delta \iff (\Sem{V}^{\Filt_D} e_\Gamma) \in \Sem{\delta}^\Filt \iff \delta \in \Sem{V}^{\Filt_D} e_\Gamma %\qquad \mbox{and} \qquad
\]
and similarly, $\Gamma \der M:\tau \iff \tau \in \Sem{M}^{\Filt_{\GS D}} e_\Gamma$.

Third and final step, typing rules of the shape
\[
\prooftree
	\Gamma_1 \der P_1:\varphi_1 \quad \cdots \quad \Gamma_n \der P_n:\varphi_n
\justifies
	\Gamma \der Q:\psi
\endprooftree
\]
are the translations of equations
\begin{equation}\label{eq:rule-translation}  
\Sem{Q}^\Filt e_\Gamma = \FILTS{\psi \mid \varphi_1 \in \Sem{P_1}^\Filt e_{\Gamma_1} \And \cdots \And \varphi_n \in \Sem{P_n}^\Filt e_{\Gamma_n}}
\end{equation}
derived from Prop. \ref{prop:type-interp} and Thr. \ref{thr:filter-model}. We take advantage of the factorization of the right hand sides of equations 
(\ref{eq:rule-translation}) into a set of types $U$ and
its closure $\texttt{Filt}(U)$, by adding the rules:
\begin{equation}\label{eq:filter-rules}
	\begin{array}{c@{\hspace{1cm}}c@{\hspace{1cm}}c}
		\prooftree
		\vspace{3mm}
		\justifies
		\Gamma \der P : \omega
		\using \omegaR
		\endprooftree
		&
		\prooftree
		\Gamma \der P : \varphi
		\quad
		\Gamma \der P : \psi
		\justifies
		\Gamma \der P : \varphi \Inter \psi
		\using \interR
		\endprooftree
		&
		\prooftree
		\Gamma \der P : \varphi
		\quad
		\varphi \leq \psi
		\justifies
		\Gamma \der P : \psi
		\using \leqR
		\endprooftree
	\end{array}
\end{equation}
Therefore we are left to translate set inclusions 

\begin{equation}\label{eq:rules}
	\Sem{Q}^\Filt e_\Gamma \supseteq \Set{\psi \mid \varphi_1 \in \Sem{P_1}^\Filt e_{\Gamma_1} \And \cdots \And \varphi_n \in \Sem{P_n}^\Filt e_{\Gamma_n}}
\end{equation}
where
$\psi$ is neither an intersection nor an $\omega$:

\begin{definition}\label{def:intersTypeSys}
({\bf Intersection type assignement system for $\lamImp$})
The system is obtained by adding to the rules (\ref{eq:filter-rules}) the following:
\[\begin{array}{c@{\hspace{0.6cm}}c}
		\prooftree
		\vspace{3mm}
		\justifies
		\Gamma, x:\delta \der x : \delta
		\using \varR
		\endprooftree
		&
		\prooftree
		\Gamma, \, x:\delta \der M : \tau
		\justifies
		\Gamma \der \lambda x.M : \delta \to \tau
		\using \lambdaR
		\endprooftree
		\\ [6mm]
		\prooftree
		\Gamma \der V:\delta
		\justifies
		\Gamma \der \unit{V}: \sigma \to \delta \times \sigma
		\using \unitR
		\endprooftree
		&
		\prooftree
		\Gamma \der M : \sigma \to \delta' \times \sigma'
		\quad
		\Gamma \der V : \delta' \to \sigma' \to \delta'' \times \sigma''
		\justifies
		\Gamma \der M \Bind V : \sigma \to \delta'' \times \sigma''
		\using \bindR
		\endprooftree
		\\ [6mm]
		\prooftree
		\Gamma, \, x:\delta \der M : \sigma \to \kappa
		\justifies
		\Gamma \der \get{\ell}{\lambda x.M} : (\tuple{\ell : \delta} \Inter \sigma) \to \kappa
		\using \getR
		\endprooftree
		&
		\prooftree
		\Gamma \der V : \delta
		\quad
		\Gamma \der M :  (\tuple{\ell : \delta} \Inter \sigma)  \to \kappa
		\quad
		\ell \not\in \dom{\sigma} 
		\justifies
		\Gamma \der \set{\ell}{V}{M} : \sigma \to \kappa
		\using \setR
		\endprooftree
	\end{array}
	\]

\end{definition}

The rules in Definition \ref{def:intersTypeSys} are obtained by instantiating the inclusion (\ref{eq:rules}) to equations (\ref{eq:UnitF}) - (\ref{eq:setF}). In conclusion, we obtain for our type system the analogous of the ``Type-semantics theorem'' for
intersection types and the ordinary $\lambda$-calculus (see \cite{BarendregtDS2013}, Thr. 16.2.7):

\begin{theorem}\label{thr:type-semantics}
Let $e\models^\Filt \Gamma$ if and only if  $e(x) \in \Sem{\delta}^\Filt$ whenever $x:\delta \in \Gamma$; then
\begin{enumerate}
		\item $\Sem{V}^{\Filt_D}e = 
		\Set{\delta \in \Lang_D \mid \exists \Gamma. \; e \models^\Filt  \Gamma \And \Gamma \der V:\delta}$
		\item $\Sem{M}^{\Filt_{\GS D}}e = 
		\Set{\tau \in \Lang_{\GS D} \mid \exists \Gamma. \; e \models^\Filt  \Gamma \And \Gamma \der M:\tau}$
\end{enumerate}
\end{theorem}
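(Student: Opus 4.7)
The plan is to prove both statements simultaneously by mutual structural induction on $V$ and $M$, establishing the two inclusions separately but in parallel. Since every typing rule of Def.~\ref{def:intersTypeSys} was derived from an equation of the form \eqref{eq:rule-translation} that characterizes the denotation of the conclusion as a filter generated by the denotations of the sub-terms, the inductive step at each syntactic constructor reduces to quoting the corresponding equation from Prop.~\ref{prop:type-interp} together with the filter-model clauses for $\Unit^\Filt$, $\Bind^\Filt$, and the interpretation of the algebraic operators $\textit{get}_\ell$ and $\textit{set}_\ell$.

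For the inclusion $\supseteq$ (soundness), one proceeds by induction on the typing derivation. The three filter-closure rules $\omegaR$, $\interR$, $\leqR$ are immediate because $\Sem{V}^{\Filt_D}e$ and $\Sem{M}^{\Filt_{\GS D}}e$ are, by construction, filters in the respective languages; the rule $\varR$ holds by the definition of $e \models^\Filt \Gamma$; and for each of $\lambdaR$, $\unitR$, $\bindR$, $\getR$, $\setR$ the induction hypothesis applied to the premises feeds the required membership into the semantic clauses of Def.~\ref{def:lam-imp-mod}, so that Prop.~\ref{prop:type-interp} forces the conclusion type into the denotation of the constructed term. For the inclusion $\subseteq$ (completeness), one proceeds by structural induction on terms: the variable case uses $\Gamma = \{x:\delta\}$; for $\lambda x.M$, given $\delta \to \tau \in \Sem{\lambda x.M}^{\Filt_D}e$, Prop.~\ref{prop:type-interp}(\ref{prop:type-interp-i}) yields $\tau \in \Sem{M}^{\Filt_{\GS D}}(e[x \mapsto {\uparrow_D}\delta])$, the IH on $M$ delivers $\Gamma, x:\delta \der M:\tau$ for a suitable $\Gamma$, and an application of $\lambdaR$ concludes; the cases $\unit{V}$, $M \Bind V$, $\get{\ell}{\lambda x.M}$ and $\set{\ell}{V}{M}$ are analogous, each reducing a given type of the compound term via Prop.~\ref{prop:type-interp} to types of sub-terms, applying IH, and recombining.

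The main obstacle is the abstraction case, where a general $\varphi \in \Sem{\lambda x.M}^{\Filt_D}e$ must first be reduced to a finite intersection of arrows $\delta_i \to \tau_i$ via the closure rules $\omegaR$, $\interR$, $\leqR$, and then the finitely many contexts $\Gamma_i$ returned by the induction hypothesis on $M$ must be unified into a single $\Gamma$ by pointwise intersection of the types assigned to shared variables, closed off with $\interR$. This reduction rests on the standard domain-theoretic fact, already built into the inductive definitions of $\Lang_D$ and $\leq_D$ in Section~\ref{sec:intersection}, that compact elements of $[\Filt_D \to \Filt_{\GS D}]$ are finite joins of step functions; the same pattern, instantiated to the product, store, and computation languages $\Lang_C$, $\Lang_S$, $\Lang_{\GS D}$, handles the remaining constructors uniformly, so that the whole induction closes without any further detour.
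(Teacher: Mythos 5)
Your proposal is correct and takes essentially the same route as the paper: the paper likewise establishes the two inclusions by induction on term interpretations (for $\subseteq$) and on type derivations (for $\supseteq$), resting on Prop.~\ref{prop:type-interp} and the filter-model clauses. The only organizational difference is that the paper first reduces arbitrary environments $e$ to the canonical $e_\Gamma$ using the fact that every filter is a union of principal filters, whereas you construct and intersect the contexts $\Gamma_i$ on the fly in the abstraction case; both devices serve the same purpose.
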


\begin{proof}
Observe that $e\models^\Filt \Gamma$ if and only if $e_\Gamma(x) = \; \uparrow\delta \subseteq e(x)$ if $x:\delta \in \Gamma$; since any filter $X$ is
the union of the principal filters $\uparrow \delta$ with $\delta \in X$, the proof reduces to establishing that
$\Sem{V}^{\Filt_D}e_\Gamma = \Set{\delta \in \Lang_D \mid  \Gamma \der V:\delta}$, and similarly for 
$\Sem{M}^{\Filt_{\GS D}}e_\Gamma$; the latter are an easy induction over term interpretations into $\Filt$ for left to right inclusions, and over type derivations
for the inclusions from right to left.
\end{proof}

In spite of the complexity of the construction we have been going through in the last sections, the payoff of our work is a system with just one typing rule for
each syntactical construct in the grammar of the calculus, plus the ``logical'' rules (\ref{eq:filter-rules}) which are standard in intersection type systems with subtyping since
\cite{BCD'83}. 

Moreover, by the obvious interpretation $\Sem{\varphi}^{\cal D}$ of types in a model $\cal D$, of which 
Prop.  \ref{prop:type-interp} is the instance in case of $\Filt$, we get soundness and completeness theorems for free. Indeed, without entering into the details, we claim that:
\[ \Gamma \models V:\delta  ~ \iff ~  \Gamma \der V :\delta \quad \mbox{and} \quad
	\Gamma \models M:\tau ~ \iff ~ \Gamma \der M :\tau. 
\]
where for a $\lamImp$-model ${\cal D}$ we write $\Gamma \models^{\cal D} T : \varphi$ if $\Sem{T}^{\cal D} e \in \Sem{\varphi}^{\cal D}$ for all $e\models^{\cal D} \Gamma$,
and write $\Gamma \models T : \varphi$ if  $\Gamma \models^{\cal D} T : \varphi$ for all $\cal D$. 

%\end{theorem}

%!TEX root = DenSemIm-Main.tex

\section{Discussion and Related works}\label{sec:related}

This work is a development of \cite{deLiguoroTreglia20}, where we considered a pure untyped computational $\lambda$-calculus, namely without operations nor constants.
Therefore, the monad $T$ and the respective unit and bind were generic, so that types cannot convey any specific information about the domain $TD$, nor about effects represented by the monad. 

In the unpublished report \cite{deLiguoroTreglia2021} we sketched the construction of the type system for the state monad $\GS$. This has been motivated on the
ground of an operational interpretation of terms that we illustrated, culminating in the characterization of convergence via the type system.
This is akin to the lazy $\lambda$-calculus \cite{Abramski-Ong'93}, where convergent terms are characterized  by the formula $(t \to t)_\bot$ of the endogenous logic of the calculus, in the sense of \cite{Abramsky'91}. 
 
While deferring the exposition of the convergence characterization to a companion paper, here we have focussed on the construction of the type system itself. 
Since \cite{BCD'83} we know that a $\lambda$-model can be constructed by taking the filters of types in an intersection type system with subtyping. The relation
among the filter-model and Scott's $D_\infty$ construction has been subject to extensive studies, starting with \cite{Coppo-et.al'84} and continuing with 
\cite{Dezani-CiancagliniHA03,alesdezahons04,Alessi-Barbanera-Dezani'06,Alessi-Severi'08}. In the meantime Abramsky's theory of domain logic in \cite{Abramsky'91}
provided a generalization of the same ideas to algebraic domains in the category of 2/3 SFP, of which $\ALG$ is a (full) subcategory, based on Stone duality.

In the present work, we use the mathematical theories just mentioned, but reversing the process from semantics to types. Usually, one starts with a calculus
and a type system, looking for a semantics and studying properties of the model. On the contrary, we move from a domain equation and the definition of the denotational
semantics of the calculus of interest and synthesize a filter-model and an intersection type system.
This synthetic use of domain logic is, in our view, prototypical w.r.t. the construction
of type systems catching properties of any computational $\lambda$-calculus with operators.
We expect that the study of such systems will be of help in understanding the operational semantics of such models, a topic that has been addressed in \cite{LagoGL17,LagoG19}, but with different mathematical tools.

A further research direction is to move from $\ALG$ to other categories such as the category of relations. The latter is deeply related to non-idempotent intersection types \cite{Carvalho18,BucciarelliEM07} that have been shown to catch intensional aspects of $\lambda$-calculi involving quantitative reasoning about computational resources. 
If the present approach can be rephrased in the category of relations, then our method could produce non-idempotent intersection type systems for effectful $\lambda$-calculi. 

In the perspective of considering categories other than $\ALG$, it is natural to ask whether the synthesis of an ``estrinisic'' type system in the sense of Reynolds out of the denotational semantics of a calculus, either typed or not, can be described in categorical terms. Intersection type theories and related systems, as we have been using in this work, are a special case of refinement type systems;
a starting point may be the framing of refinement type systems as functors in \cite{MelliesZeilberger2015}.

\myparagraph{Acknowledgements} We are grateful to the anonymous referees for their comments and hints to categorical prospectives of the present work.

\bibliographystyle{eptcs}
\bibliography{references}

\begin{thebibliography}{10}
\providecommand{\bibitemdeclare}[2]{}
\providecommand{\surnamestart}{}
\providecommand{\surnameend}{}
\providecommand{\urlprefix}{Available at }
\providecommand{\url}[1]{\texttt{#1}}
\providecommand{\href}[2]{\texttt{#2}}
\providecommand{\urlalt}[2]{\href{#1}{#2}}
\providecommand{\doi}[1]{doi:\urlalt{http://dx.doi.org/#1}{#1}}
\providecommand{\eprint}[1]{arXiv:\urlalt{https://arxiv.org/abs/#1}{#1}}
\providecommand{\bibinfo}[2]{#2}

\bibitemdeclare{article}{Abramsky'91}
\bibitem{Abramsky'91}
\bibinfo{author}{S.~\surnamestart Abramsky\surnameend} (\bibinfo{year}{1991}):
  \emph{\bibinfo{title}{Domain Theory in Logical Form}}.
\newblock {\sl \bibinfo{journal}{Ann. Pure Appl. Log.}}
  \bibinfo{volume}{51}(\bibinfo{number}{1-2}), pp. \bibinfo{pages}{1--77},
  \doi{10.1016/0168-0072(91)90065-T}.

\bibitemdeclare{article}{Abramski-Ong'93}
\bibitem{Abramski-Ong'93}
\bibinfo{author}{S.~\surnamestart Abramsky\surnameend} \&
  \bibinfo{author}{C.-H.L. \surnamestart Ong\surnameend}
  (\bibinfo{year}{1993}): \emph{\bibinfo{title}{{Full abstraction in the lazy
  lambda calculus}}}.
\newblock {\sl \bibinfo{journal}{Inf. Comput.}}
  \bibinfo{volume}{105}(\bibinfo{number}{2}), pp. \bibinfo{pages}{159--267},
  \doi{10.1006/inco.1993.1044}.

\bibitemdeclare{inproceedings}{alesdezahons04}
\bibitem{alesdezahons04}
\bibinfo{author}{F.~\surnamestart Alessi\surnameend},
  \bibinfo{author}{M.~\surnamestart Dezani-Ciancaglini\surnameend} \&
  \bibinfo{author}{F.~\surnamestart Honsell\surnameend} (\bibinfo{year}{2004}):
  \emph{\bibinfo{title}{{Inverse Limit Models as Filter Models}}}.
\newblock In \bibinfo{editor}{Delia \surnamestart Kesner\surnameend},
  \bibinfo{editor}{Femke \surnamestart van Raamsdonk\surnameend} \&
  \bibinfo{editor}{Joe \surnamestart Wells\surnameend}, editors: {\sl
  \bibinfo{booktitle}{HOR'04}}, \bibinfo{publisher}{RWTH Aachen},
  \bibinfo{address}{Aachen}, pp. \bibinfo{pages}{3--25}.


\bibitemdeclare{article}{Alessi-Barbanera-Dezani'06}
\bibitem{Alessi-Barbanera-Dezani'06}
\bibinfo{author}{F. \surnamestart Alessi\surnameend},
  \bibinfo{author}{F. \surnamestart Barbanera\surnameend} \&
  \bibinfo{author}{M. \surnamestart Dezani{-}Ciancaglini\surnameend}
  (\bibinfo{year}{2006}): \emph{\bibinfo{title}{Intersection types and lambda
  models}}.
\newblock {\sl \bibinfo{journal}{Theor. Comput. Sci.}}
  \bibinfo{volume}{355}(\bibinfo{number}{2}), pp. \bibinfo{pages}{108--126},
  \doi{10.1016/j.tcs.2006.01.004}.


\bibitemdeclare{inproceedings}{Alessi-Severi'08}
\bibitem{Alessi-Severi'08}
\bibinfo{author}{F. \surnamestart Alessi\surnameend} \&
  \bibinfo{author}{P. \surnamestart Severi\surnameend}
  (\bibinfo{year}{2008}): \emph{\bibinfo{title}{Recursive Domain Equations of
  Filter Models}}.
\newblock In \bibinfo{editor}{Viliam \surnamestart Geffert\surnameend},
  \bibinfo{editor}{Juhani \surnamestart Karhum{\"{a}}ki\surnameend},
  \bibinfo{editor}{Alberto \surnamestart Bertoni\surnameend},
  \bibinfo{editor}{Bart \surnamestart Preneel\surnameend},
  \bibinfo{editor}{Pavol \surnamestart N{\'{a}}vrat\surnameend} \&
  \bibinfo{editor}{M{\'{a}}ria \surnamestart Bielikov{\'{a}}\surnameend},
  editors: {\sl \bibinfo{booktitle}{{SOFSEM} 2008: Theory and Practice of
  Computer Science, 34th Conference on Current Trends in Theory and Practice of
  Computer Science, Nov{\'{y}} Smokovec, Slovakia, January 19-25, 2008,
  Proceedings}}, {\sl \bibinfo{series}{Lecture Notes in Computer Science}}
  \bibinfo{volume}{4910}, \bibinfo{publisher}{Springer}, pp.
  \bibinfo{pages}{124--135}, \doi{10.1007/978-3-540-77566-9\_11}.


\bibitemdeclare{article}{BCD'83}
\bibitem{BCD'83}
\bibinfo{author}{H.~P. \surnamestart Barendregt\surnameend},
  \bibinfo{author}{M.~\surnamestart Coppo\surnameend} \&
  \bibinfo{author}{M.~\surnamestart {Dezani-Ciancaglini}\surnameend}
  (\bibinfo{year}{1983}): \emph{\bibinfo{title}{A Filter Lambda Model and the
  Completeness of Type Assignment}}.
\newblock {\sl \bibinfo{journal}{J. Symb. Log.}}
  \bibinfo{volume}{48}(\bibinfo{number}{4}), pp. \bibinfo{pages}{931--940},
  \doi{10.2307/2273659}.

\bibitemdeclare{book}{BarendregtDS2013}
\bibitem{BarendregtDS2013}
\bibinfo{author}{H.~P. \surnamestart Barendregt\surnameend},
  \bibinfo{author}{W.~\surnamestart Dekkers\surnameend} \&
  \bibinfo{author}{R.~\surnamestart Statman\surnameend} (\bibinfo{year}{2013}):
  \emph{\bibinfo{title}{Lambda Calculus with Types}}.
\newblock \bibinfo{series}{Perspectives in logic},
  \bibinfo{publisher}{Cambridge University Press},
  \doi{10.1017/CBO9781139032636}.

\bibitemdeclare{inproceedings}{BentonHM00}
\bibitem{BentonHM00}
\bibinfo{author}{N.~\surnamestart Benton\surnameend},
  \bibinfo{author}{J.~\surnamestart Hughes\surnameend} \&
  \bibinfo{author}{E.~\surnamestart Moggi\surnameend} (\bibinfo{year}{2002}):
  \emph{\bibinfo{title}{Monads and Effects}}.
\newblock In: {\sl \bibinfo{booktitle}{Applied Semantics, International Summer
  School, {APPSEM} 2000}}, {\sl \bibinfo{series}{Lecture Notes in Computer
  Science}} \bibinfo{volume}{2395}, \bibinfo{publisher}{Springer}, pp.
  \bibinfo{pages}{42--122}, \doi{10.1007/3-540-45699-6\_2}.

\bibitemdeclare{inproceedings}{BentonKBH09}
\bibitem{BentonKBH09}
\bibinfo{author}{N. \surnamestart Benton\surnameend}, \bibinfo{author}{A.
  \surnamestart Kennedy\surnameend}, \bibinfo{author}{L. \surnamestart
  Beringer\surnameend} \& \bibinfo{author}{M. \surnamestart
  Hofmann\surnameend} (\bibinfo{year}{2009}): \emph{\bibinfo{title}{Relational
  semantics for effect-based program transformations: higher-order store}}.
\newblock In \bibinfo{editor}{Ant{\'{o}}nio \surnamestart Porto\surnameend} \&
  \bibinfo{editor}{Francisco~Javier \surnamestart
  L{\'{o}}pez{-}Fraguas\surnameend}, editors: {\sl
  \bibinfo{booktitle}{Proceedings of the 11th International {ACM} {SIGPLAN}
  Conference on Principles and Practice of Declarative Programming, September
  7-9, 2009, Coimbra, Portugal}}, \bibinfo{publisher}{{ACM}}, pp.
  \bibinfo{pages}{301--312}, \doi{10.1145/1599410.1599447}.


\bibitemdeclare{inproceedings}{BucciarelliEM07}
\bibitem{BucciarelliEM07}
\bibinfo{author}{A.~\surnamestart Bucciarelli\surnameend},
  \bibinfo{author}{T.~\surnamestart Ehrhard\surnameend} \&
  \bibinfo{author}{G.~\surnamestart Manzonetto\surnameend}
  (\bibinfo{year}{2007}): \emph{\bibinfo{title}{Not Enough Points Is Enough}}.
\newblock In \bibinfo{editor}{Jacques \surnamestart Duparc\surnameend} \&
  \bibinfo{editor}{Thomas~A. \surnamestart Henzinger\surnameend}, editors: {\sl
  \bibinfo{booktitle}{Computer Science Logic, 21st International Workshop,
  {CSL} 2007, 16th Annual Conference of the EACSL}}, {\sl
  \bibinfo{series}{Lecture Notes in Computer Science}} \bibinfo{volume}{4646},
  \bibinfo{publisher}{Springer}, pp. \bibinfo{pages}{298--312},
  \doi{10.1007/978-3-540-74915-8\_24}.


\bibitemdeclare{article}{Carvalho18}
\bibitem{Carvalho18}
\bibinfo{author}{D. \surnamestart de~Carvalho\surnameend}
  (\bibinfo{year}{2018}): \emph{\bibinfo{title}{Execution time of
  {\(\lambda\)}-terms via denotational semantics and intersection types}}.
\newblock {\sl \bibinfo{journal}{Math. Struct. Comput. Sci.}}
  \bibinfo{volume}{28}(\bibinfo{number}{7}), pp. \bibinfo{pages}{1169--1203},
  \doi{10.1017/S0960129516000396}.


\bibitemdeclare{inproceedings}{Coppo-et.al'84}
\bibitem{Coppo-et.al'84}
\bibinfo{author}{M.~\surnamestart Coppo\surnameend},
  \bibinfo{author}{M.~\surnamestart {Dezani-Ciancaglini}\surnameend},
  \bibinfo{author}{F.~\surnamestart Honsell\surnameend} \&
  \bibinfo{author}{G.~\surnamestart Longo\surnameend} (\bibinfo{year}{1984}):
  \emph{\bibinfo{title}{{Extended type structures and filter lambda models}}}.
\newblock In \bibinfo{editor}{G.~\surnamestart Lolli\surnameend},
  \bibinfo{editor}{G.~\surnamestart Longo\surnameend} \&
  \bibinfo{editor}{A.~\surnamestart Marcja\surnameend}, editors: {\sl
  \bibinfo{booktitle}{Logic Colloquium 82}},
  \bibinfo{publisher}{North-Holland}, \bibinfo{address}{Amsterdam, the
  Netherlands}, pp. \bibinfo{pages}{241--262},
  \doi{10.1016/S0049-237X(08)71819-6}.

\bibitemdeclare{inproceedings}{LagoGL17}
\bibitem{LagoGL17}
\bibinfo{author}{U.~\surnamestart {Dal Lago}\surnameend},
  \bibinfo{author}{F.~\surnamestart Gavazzo\surnameend} \&
  \bibinfo{author}{P.~B. \surnamestart Levy\surnameend} (\bibinfo{year}{2017}):
  \emph{\bibinfo{title}{{Effectful Applicative Bisimilarity: Monads, Relators,
  and Howe's Method}}}.
\newblock In: {\sl \bibinfo{booktitle}{32nd Annual {ACM/IEEE} Symposium on
  Logic in Computer Science, {LICS} 2017}}, \bibinfo{publisher}{{IEEE} Computer
  Society}, pp. \bibinfo{pages}{1--12}, \doi{10.1109/LICS.2017.8005117}.

\bibitemdeclare{article}{deLiguoroTreglia20}
\bibitem{deLiguoroTreglia20}
\bibinfo{author}{U.~\surnamestart de'Liguoro\surnameend} \&
  \bibinfo{author}{R.~\surnamestart Treglia\surnameend} (\bibinfo{year}{2020}):
  \emph{\bibinfo{title}{The untyped computational \emph{{\(\lambda\)}}-calculus
  and its intersection type discipline}}.
\newblock {\sl \bibinfo{journal}{Theor. Comput. Sci.}} \bibinfo{volume}{846},
  pp. \bibinfo{pages}{141--159}, \doi{10.1016/j.tcs.2020.09.029}.


\bibitemdeclare{misc}{deLiguoroTreglia2021}
\bibitem{deLiguoroTreglia2021}
\bibinfo{author}{U.~\surnamestart de'Liguoro\surnameend} \&
  \bibinfo{author}{R.~\surnamestart Treglia\surnameend} (\bibinfo{year}{2021}):
  \emph{\bibinfo{title}{Intersection Types for a Computational Lambda-Calculus
  with Global State, https://arxiv.org/abs/2104.01358}}.
\newblock \eprint{2104.01358}.

\bibitemdeclare{article}{Dezani-CiancagliniHA03}
\bibitem{Dezani-CiancagliniHA03}
\bibinfo{author}{M. \surnamestart Dezani{-}Ciancaglini\surnameend},
  \bibinfo{author}{F. \surnamestart Honsell\surnameend} \&
  \bibinfo{author}{F. \surnamestart Alessi\surnameend}
  (\bibinfo{year}{2003}): \emph{\bibinfo{title}{A complete characterization of
  complete intersection-type preorders}}.
\newblock {\sl \bibinfo{journal}{{ACM} Trans. Comput. Log.}}
  \bibinfo{volume}{4}(\bibinfo{number}{1}), pp. \bibinfo{pages}{120--147},
  \doi{10.1145/601775.601780}.


\bibitemdeclare{phdthesis}{amsdottorato9075}
\bibitem{amsdottorato9075}
\bibinfo{author}{F.~\surnamestart Gavazzo\surnameend} (\bibinfo{year}{2019}):
  \emph{\bibinfo{title}{Coinductive Equivalences and Metrics for Higher-order
  Languages with Algebraic Effects}}.
\newblock Ph.D. thesis, \bibinfo{school}{University of Bologna, Italy}.
\newblock \urlprefix\url{http://amsdottorato.unibo.it/9075/}.

\bibitemdeclare{article}{HylandP06}
\bibitem{HylandP06}
\bibinfo{author}{M.~\surnamestart Hyland\surnameend} \&
  \bibinfo{author}{J.~\surnamestart Power\surnameend} (\bibinfo{year}{2006}):
  \emph{\bibinfo{title}{Discrete Lawvere theories and computational effects}}.
\newblock {\sl \bibinfo{journal}{Theor. Comput. Sci.}}
  \bibinfo{volume}{366}(\bibinfo{number}{1-2}), pp. \bibinfo{pages}{144--162},
  \doi{10.1016/j.tcs.2006.07.007}.


\bibitemdeclare{article}{Hyland2007TheCT}
\bibitem{Hyland2007TheCT}
\bibinfo{author}{M. \surnamestart Hyland\surnameend} \&
  \bibinfo{author}{J. \surnamestart Power\surnameend} (\bibinfo{year}{2007}):
  \emph{\bibinfo{title}{The Category Theoretic Understanding of Universal
  Algebra: Lawvere Theories and Monads}}.
\newblock {\sl \bibinfo{journal}{Electron. Notes Theor. Comput. Sci.}}
  \bibinfo{volume}{172}, pp. \bibinfo{pages}{437--458},
  \doi{10.1016/j.entcs.2007.02.019}.


\bibitemdeclare{inproceedings}{LagoG19}
\bibitem{LagoG19}
\bibinfo{author}{U.~Dal \surnamestart Lago\surnameend} \&
  \bibinfo{author}{F.~\surnamestart Gavazzo\surnameend} (\bibinfo{year}{2019}):
  \emph{\bibinfo{title}{Effectful Normal Form Bisimulation}}.
\newblock In \bibinfo{editor}{Lu{\'{\i}}s \surnamestart Caires\surnameend},
  editor: {\sl \bibinfo{booktitle}{Programming Languages and Systems - 28th
  European Symposium on Programming, {ESOP} 2019}}, {\sl
  \bibinfo{series}{Lecture Notes in Computer Science}} \bibinfo{volume}{11423},
  \bibinfo{publisher}{Springer}, pp. \bibinfo{pages}{263--292},
  \doi{10.1007/978-3-030-17184-1\_10}.


\bibitemdeclare{inproceedings}{MelliesZeilberger2015}
\bibitem{MelliesZeilberger2015}
\bibinfo{author}{P.-A. \surnamestart Melli\`{e}s\surnameend} \&
  \bibinfo{author}{N.~\surnamestart Zeilberger\surnameend}
  (\bibinfo{year}{2015}): \emph{\bibinfo{title}{Functors are Type Refinement
  Systems}}.
\newblock In: {\sl \bibinfo{booktitle}{Proceedings of the 42nd Annual {ACM}
  {SIGPLAN-SIGACT} Symposium on Principles of Programming Languages, {POPL}
  2015}}, \bibinfo{publisher}{{ACM}}, pp. \bibinfo{pages}{3--16},
  \doi{10.1145/2676726.2676970}.

\bibitemdeclare{article}{Moggi'91}
\bibitem{Moggi'91}
\bibinfo{author}{E.~\surnamestart Moggi\surnameend} (\bibinfo{year}{1991}):
  \emph{\bibinfo{title}{Notions of Computation and Monads}}.
\newblock {\sl \bibinfo{journal}{Inf. Comput.}}
  \bibinfo{volume}{93}(\bibinfo{number}{1}), pp. \bibinfo{pages}{55--92},
  \doi{10.1016/0890-5401(91)90052-4}.

\bibitemdeclare{inproceedings}{PlotkinP02}
\bibitem{PlotkinP02}
\bibinfo{author}{G.~D. \surnamestart Plotkin\surnameend} \&
  \bibinfo{author}{J.~\surnamestart Power\surnameend} (\bibinfo{year}{2002}):
  \emph{\bibinfo{title}{Notions of Computation Determine Monads}}.
\newblock In: {\sl \bibinfo{booktitle}{{FOSSACS} 2002}}, {\sl
  \bibinfo{series}{Lecture Notes in Computer Science}} \bibinfo{volume}{2303},
  \bibinfo{publisher}{Springer}, pp. \bibinfo{pages}{342--356},
  \doi{10.1007/3-540-45931-6\_24}.


\bibitemdeclare{article}{PlotkinP03}
\bibitem{PlotkinP03}
\bibinfo{author}{G.~D. \surnamestart Plotkin\surnameend} \&
  \bibinfo{author}{J.~\surnamestart Power\surnameend} (\bibinfo{year}{2003}):
  \emph{\bibinfo{title}{Algebraic Operations and Generic Effects}}.
\newblock {\sl \bibinfo{journal}{Appl. Categorical Struct.}}
  \bibinfo{volume}{11}(\bibinfo{number}{1}), pp. \bibinfo{pages}{69--94},
  \doi{10.1023/A:1023064908962}.

\bibitemdeclare{article}{Power06}
\bibitem{Power06}
\bibinfo{author}{J. \surnamestart Power\surnameend} (\bibinfo{year}{2006}):
  \emph{\bibinfo{title}{Generic models for computational effects}}.
\newblock {\sl \bibinfo{journal}{Theor. Comput. Sci.}}
  \bibinfo{volume}{364}(\bibinfo{number}{2}), pp. \bibinfo{pages}{254--269},
  \doi{10.1016/j.tcs.2006.08.006}.


\bibitemdeclare{inproceedings}{Wadler92}
\bibitem{Wadler92}
\bibinfo{author}{P.~\surnamestart Wadler\surnameend} (\bibinfo{year}{1992}):
  \emph{\bibinfo{title}{The Essence of Functional Programming}}.
\newblock In: {\sl \bibinfo{booktitle}{Conference Record of the Nineteenth
  Annual {ACM} {SIGPLAN-SIGACT} Symposium on Principles of Programming
  Languages, 1992}}, \bibinfo{publisher}{{ACM} Press}, pp.
  \bibinfo{pages}{1--14}, \doi{10.1145/143165.143169}.

\bibitemdeclare{inproceedings}{Wadler-Monads}
\bibitem{Wadler-Monads}
\bibinfo{author}{P.~\surnamestart Wadler\surnameend} (\bibinfo{year}{1995}):
  \emph{\bibinfo{title}{Monads for Functional Programming}}.
\newblock In: {\sl \bibinfo{booktitle}{Advanced Functional Programming, First
  International Spring School on Advanced Functional Programming Techniques}},
  {\sl \bibinfo{series}{Lecture Notes in Computer Science}}
  \bibinfo{volume}{925}, \bibinfo{publisher}{Springer}, pp.
  \bibinfo{pages}{24--52}, \doi{10.1007/3-540-59451-5\_2}.

\end{thebibliography}
\end{document}